\newtheorem{cor}{Corollary}
\theoremstyle{definition}
\newtheorem{remark}{Remark}
\renewcommand{\H}{\mathcal{H}}
\newcommand{\Be}{{\rm{Bern}}}
\newcommand{\eqd}{\stackrel{d}{=}}
\newcommand{\I}{{\rm{1}}}
\newcommand{\R}{{\mathbb{R}}}
\newcommand{\N}{{\mathbb{N}}}
\newcommand{\A}{\mathscr{A}}
\renewcommand{\ell}{L}
\renewcommand{\S}{S}
\renewcommand{\A}{\mathcal{A}}
\newcommand{\T}{\mathcal{T}}
\newcommand{\E}{{\mathrm E}}
\renewcommand{\i}{\mathrm{i}}
\newcommand{\C}{\mathbb{C}}
\def\R{\mathbb{R}}
\def\N{\mathbb{N}}
\newcommand{\eps}{\varepsilon}
\newcommand{\e}{e}
\newcommand{\argmin}{\operatornamewithlimits{arg\,min}}
\renewcommand{\Re}{\operatorname{Re}}
\renewcommand{\Im}{\operatorname{Im}}
\renewcommand{\P}{{\mathrm P}}
\newcommand{\B}{\mathcal{B}}
\renewcommand{\kappa}{\varkappa}
\newcommand{\F}{\mathcal{F}}
\newtheorem{thm}{Theorem}
\newtheorem{lem}{Lemma}
\newtheorem{defi}{Definition}
\theoremstyle{remark}
\newtheorem{ex}{Example}
\title{Statistical Inference for Quasi-Infinitely Divisible Distributions via Fourier Methods} 
\author[1]{Vladimir Panov \thanks{vpanov@hse.ru}}
\author[2]{Anton Ryabchenko \thanks{apryabchenko@hse.ru}}
\affil[1,2]{HSE University,
Pokrovsky boulevard 11, 109028 Moscow, Russia}
\begin{document}
\maketitle
\begin{abstract}
This study focuses on statistical inference for the class of quasi-infinitely divisible  (QID) distributions, which was  recently introduced by Lindner, Pan and Sato~\cite{LPS2018}.  The paper presents a Fourier approach, based on the analogue of the L{\'e}vy-Khintchine theorem with a signed spectral measure. In particular, this method allows to recover the components in a mixture model \(\mu=p  \tilde\mu_{\sigma} + (1-p) \mu^\circ\),  which is a QID distribution, where \(p>1/2\), \(\tilde\mu_\sigma\) is a centered normal distribution with unknown variance \(\sigma^2\), and the measure \(\mu^\circ\) satisfies a certain nonparametric condition. We prove that for some subclasses of QID distributions, the considered estimates have polynomial rates of convergence. This is an interesting finding when compared to the logarithmic convergence rates of similar methods for infinitely divisible distributions, which cannot be improved in general. We demonstrate the numerical performance of the algorithm using simulated examples. \end{abstract}

\section{Introduction} 
A random variable \(X\) has quasi-infinitely divisible (QID) distribution if there exist two random variables \(Y\) and \(Z\) with infinitely divisible (ID) distributions such that \begin{eqnarray}\label{def1}
X+Y \eqd Z,
\end{eqnarray} where \(X\) and \(Y\) are independent. Trivially, this class includes all ID distributions, but it is actually significantly larger. In particular, it includes several notable examples of distributions, which are not ID, such as the Bernoulli distribution with parameter that is not equal to 1/2 and some mixtures of normal distributions with different variances.

A complete characterization of quasi-infinite divisibility is not known yet, but there are some simple criteria for certain subclasses of distributions. For instance,  a distribution on integers  is QID if and only if its characteristic function doesn't have real zeros (Lindner, Pan and Sato~\cite{LPS2018}). More generally,  a discrete distribution is QID if and only if its characteristic function is separated from zero (Alexeev and Khartov~\cite{AK2023}). This result also holds for  mixtures of discrete and absolutely continuous distributions (Berger and Kutlu~\cite{BK2023}). Along with the mentioned papers on the QID random variables, there are several papers dealing with the generalisation of this concept to random vectors (Berger et al.~\cite{BKL2022}) and stochastic processes (Passegeri~\cite{Pass2020}). Note that these notions are not only of theoretical interest, they  have  found applications in financial modelling (Madan et al. \cite{Madan2023}),  physics, specifically in the analysis of Landau levels (Chhaiba et al.~\cite{Ch2016}, Demni and Mouayn \cite{DM2015}), number theory (Nakamura~\cite{Nakamura}) and insurance mathematics (Zhang et al.~\cite{Zhang2014}). 

It is worth mentioning that  this research area has emerged relatively recently, starting with the pioneer work~\cite{LPS2018}. Prior to that, QID had only been mentioned in some contexts without a systematic study  (Cuppens~\cite{Cuppens}, Linnik and Ostrovsky~\cite{LO1977}). 

Our research deals with the statistical inference for QID distributions. In contrast to the extensive literature on statistical estimation for ID distributions and L{\'e}vy-based models, the estimation for a QID distribution has been considered in only one paper by Passegeri~\cite{Pass2023} in the framework of Bayesian analysis. In the current paper, we consider a Fourier-based approach, which relies on the analogue of the well-known L{\'e}vy\,--\,Khintchine formula for the QID distributions. This analogue yields that any QID distribution can be described by a triplet \((\gamma, \sigma^2, \nu),\) where the first two elements \(\gamma \in \R,\) \(\sigma \in \R_+\) are the same as in the ID case, while the third, \(\nu\), is a signed spectral measure. This triplet is unique for any QID distribution. \newline

\textit{Contribution.} 
In this paper we consider two related problems: 
\begin{enumerate}
\item estimation of the  triplet \((\gamma, \sigma^2, \nu)\) of a QID distribution using a Fourier-based approach;
\item recover the components of a mixture model \(\mu = p \mu_1 + (1-p)\mu_2\), where \(\mu_1\) is known up to a parameter, and the unknown distribution \(\mu_2\) satisfies a condition guaranteeing that \(\mu\) is a QID distribution.
\end{enumerate}
The first problem was studied in the ID case, in particular, for L{\'e}vy processes (Belomestny~\cite{DB2010}, Comte and Genon-Catalot~\cite{CGC2010},
 Gugushvili~\cite{Gugu}, Neumann and Rei{\ss}~\cite{NR2009}), but, to the best of our knowledge, it is  new in the context of QID distributions, which are not ID.
 
As for the second problem,  our particular interest will be to the distributions, which are the mixtures of the normal distribution \(\tilde\mu_{\sigma}\) with zero mean and variance equal to \(\sigma^2\), and an absolutely continuous distribution \(\mu^\circ\), that is, 
\begin{eqnarray}\label{model1}
\mu &=&p  \tilde\mu_{\sigma} + (1-p) \mu^\circ, \qquad p \in (0,1).
\end{eqnarray}
This distribution is QID, if \(p>1/2\) and \(\mu^\circ\) is the distribution with  characteristic function \(\phi^\circ\) such that 
\begin{eqnarray}
\label{H}
\mathcal{H}(u):=\phi^\circ(u) e^{u^2 \sigma^2 /2}, \qquad u \in \R,
\end{eqnarray} is the characteristic function of a probability measure.

%
This setup leads to a type of contamination model, where the clean data from a normal distribution $\tilde\mu_{\sigma}$ are contaminated by a second mixture component, which can be represented, e.g., by any QID distribution with the second element of the L{\'e}vy triplet larger than \(\sigma^2\) (see Example~\ref{ex3} below). The term ``contamination'' was first introduced by Huber \cite{Huber} and has since been applied in various fields, starting from the pioneering works by Mandelbrot~\cite{Mandelbrot} in finance, and Brownie, Habicht and Robson \cite{Brownie} in chemistry.

In our paper, we provide a semiparametric procedure for the estimation of unknown parameters \(p\) and \(\sigma^2\), and unknown distribution \(\mu^\circ\), from the observations of \(\mu.\) 
One widely used tool for solving estimation problems for mixtures is the EM algorithm which is a parametric method that requires assumptions on the form of the mixture components, most commonly used for normal mixtures (see McLachlan and Krishnan~\cite{McLahclanEM}, McLachlan and Peel~\cite{McLachlan}). However, semi-parametric EM algorithms also exist, but they have their own limitations. For example, it is often assumed that both distributions, \(\mu_1, \mu_2\), or only the second component \(\mu_2\) are symmetric around some unknown value, resulting in the location-shifted mixtures (Bordes, Chauveau, Vandekerkhove \cite{Bordes}, Hohmann and Holzmann~\cite{Hohman}, Bordes and Vandekerkhove~\cite{Bordes2010}, Hunter,  Wang  and Hettmansperger~\cite{Hunter}). We would like to emphasise that our paper makes use of completely different assumptions, which lead to the identification of both components.

The parametric part of our algorithm (estimation of \(p\) and \(\sigma^2\)) is based on the representation of the characteristic function in a form similar to the L{\'e}vy\,--\,Khintchine formula.  We show that the estimates  have polynomial rates of convergence if the distribution of \(\mu^\circ\) is supersmooth. This fact is  an interesting observation when compared to the fact that the convergence rates of similar methods for ID distributions are logarithmic and can't be improved in general, see Theorems~4.3, 4.4, 5.7 in Belomestny and Reiss~\cite{BR2015}. To estimate \(\mu^\circ\), we propose the kernel estimate, which also has a polynomial rate of convergence, given our choice of the estimates of \(p\) and \(\sigma^2\). Our approach provides a  comprehensive semi-parametric estimation procedure for the model~\eqref{model1}, which, to the best of our knowledge, does not have any analogues in the literature for the considered generality. \newline

\textit{Structure.} The paper is organised as follows. In the next section, we provide a brief overview of quasi-infinitely divisible distributions.  Next, in Section~\ref{sec3} we propose a Fourier-based approach for all elements of the characteristic triplet of a QID distribution.  Later, in Section~\ref{sec4}  we describe a semiparametric estimation procedure for the model~\eqref{model1}, which relies on the ideas presented in Section~\ref{sec3} and the kernel estimates for the density function of the measure \(\mu^\circ\).
 Our main theoretical findings are given in Section~\ref{sec5}. Comparison of Theorems~\ref{thm1} and \ref{thm2} yields that the rates  of convergence for the model~\eqref{model1} are essentially faster than in the general case if the distribution with characteristic function \(\mathcal{H}\) is supersmooth. 
Section~\ref{secnum} deals with the empirical study and includes, in particular, numerical comparison with the EM algorithm for the normal mixtures. The proofs are collected in Section~\ref{sec7}.

 \section{Quasi-infinitely divisible distributions}
 \subsection{Definition and spectral representation}\label{sec21}
The  L{\'e}vy\,--\,Khintchine formula states that the characteristic function \(\phi\) of an ID distribution can be represented in the following form \begin{eqnarray} \label{LC}
    \phi(u)
    =
    \exp\Bigl\{
      \i \gamma u-\frac{1}{2}\sigma^2 u^2
			    +
		    \int_{\mathbb{R}\setminus \{0\}}
		    \left( 
		    		e^{\i u x}-1 - \i u c(x) 		\right)			\nu(dx)\Bigr\},
\end{eqnarray}
where \(\gamma \in \R, \sigma \in \R_+,\) \(\nu: \B(\R \setminus \{0\}) \to \R_+\) is a measure such that  \(\int_{\R}  \min(1, x^2)\nu(dx) < \infty,\)  and 
$c: \mathbb{R} \rightarrow \mathbb{R}$ is a so-called representation function, which is a bounded, Borel measurable function, satisfying \(c(0)=0\) and
\begin{equation}\label{repr_function}
c(x) = x+ o(x^2) \qquad \text{as} \quad x \to 0.
\end{equation}
The function $c(x)$ is typically chosen as $c(x) = x\I_{[-1, 1]}(x)$.

To introduce the quasi-infinitely divisible (QID) distributions and an analogue of~\eqref{LC} for the QID case, it is convenient to merge \(\sigma\) and \(\nu\) into a single finite measure  \(\zeta\) on \(\B(\R):\)
\begin{eqnarray*}
\zeta(dx) &:=& \sigma^2 \delta_0(dx) + \min(1,x^2) \I_ {\{ x\ne 0\}}\nu(dx), 
\end{eqnarray*}
which allows to rewrite~\eqref{LC} in the following form  
\begin{eqnarray} \label{LC2}
\phi(u) &=& \exp\Bigl\{ 
\i \gamma u+ \int_{\R} g(x,u) \zeta(dx)
\Bigr\}, 
\end{eqnarray}
where
\begin{eqnarray*}
g(x,u) &:=& \begin{cases} 
\frac{e^{\i u x}-1 - \i u c(x) }{\min(1, x^2)}, & x \ne 0,\\
-u^2 /2,  & x = 0.
\end{cases}  \end{eqnarray*}
Trivially, there is a one-to-one correspondence between the  pair $(\gamma, \zeta)$ and the initial L{\'e}vy triplet \((\gamma, \sigma^2, \nu)\), provided that the function $c(x)$ is fixed. In fact,
\begin{equation}
\label{zeta}
    \sigma^2 = \zeta(\{0\}) \quad \text{and} \quad \nu(dx) =  \bigl(\min(1, x^2)\bigr)^{-1}\zeta(dx), \qquad  x \in \R \setminus \{0\}.
\end{equation}
The main idea behind the concept of QID, is to replace the measure \(\zeta\) in~\eqref{LC2} by a signed measure.  Recall that 
a \textit{signed measure} $\zeta$ on $\mathbb{R}$ is a function $\zeta: \mathcal{B} (\R) \rightarrow [-\infty, \infty]$ on the Borel $\sigma$-algebra $\mathcal{B}$ such that
\(
    \zeta(\emptyset) = 0 \) and \(\zeta(\bigcup_{j=1}^{\infty}A_j) = \sum_{j=1}^{\infty} \zeta(A_j)
\)
for all sequences $(A_j)_{j \in \mathbb{N}}$ of pairwise disjoint sets in $\mathcal{B}(\R)$, where the   value of the series does not depend on the order of the $A_j$, i.e., the series converges unconditionally.  
A signed measure $\zeta$ is finite, if $\zeta(A) \in \mathbb{R}$ for all $A \in \mathcal{B}(\R)$. 

Now we are in a position to define QID distributions.
\begin{defi} 
A distribution \(\mu\) is QID, if its characteristic function admits the representation \eqref{LC2} with some $\gamma \in \mathbb{R}$ and a finite signed measure $\zeta: \B(\R) \to [-\infty, +\infty]$.
\end{defi}
Similarly to the ID distributions, the characteristic pair $(\zeta, \gamma)$ uniquely determines the QID distribution for a fixed representation function $c(x)$. 
Note that not every pair $(\gamma, \zeta)$ forms a QID distribution. Otherwise, let $(\gamma,\zeta)$ be a characteristic pair of a QID distribution $\mu$ and $(n^{-1}\gamma, n^{-1}\zeta)$ be a characteristic pair of a QID distribution $\mu_n$. Then $\mu$ must be infinitely divisible since $\mu = \mu^{*n}_n$ for any $n \in \N$, but this leads to $\zeta$ being a positive measure, which is a contradiction.

For the study of QID distributions, it is convenient to apply the Hahn\,--\,Jordan decomposition, which states that for a finite signed measure $\zeta$, there exist disjoint Borel sets $C_{+}$ and $C_{-}$ and unique finite measures $\zeta_{+}$ and $\zeta_{-}$ on $\B(\R)$ such that 
\begin{equation*}
    \zeta_{+}(\mathbb{R} \setminus C_{+}) = \zeta_{-}(\mathbb{R} \setminus C_{-}) = 0 \quad \text{and} \quad \zeta = \zeta_{+} - \zeta_{-}.
\end{equation*}
 Combining this with~\eqref{LC2} leads to the following representation 
\begin{eqnarray}\label{phi12}
\phi(u) = \frac{\phi_1(u)}{\phi_2(u)},
\end{eqnarray}
where \(\phi_1\) and \(\phi_2\) are characteristic functions of ID distributions with  characteristic pairs \((\gamma,\zeta_+)\) and \((0,\zeta_-). \) The equivalence with the definition given at the beginning of the introduction is now clear. 

Note that for any QID distribution 
\begin{eqnarray*}
 \sigma^2:= \zeta(\{0\}) = -2 \lim_{u\to \infty} \bigl( \log(\phi(u)) / u^2 \bigr) \geq 0,
\end{eqnarray*}
see Lemma~2.7 in~\cite{LPS2018}, and therefore the characteristic pair \((0, \zeta_{-})\) is equivalent to the L{\'e}vy triplet in the form \((0,0, \nu_{-}).\) Therefore, \eqref{phi12} yields  the following analogue of the L{\'e}vy\,--\,Khintchine formula~\eqref{LC} for QID distributions
\begin{multline} \label{LC3}
    \phi(u)
    =
    \exp\Bigl\{
      \i \gamma u-\frac{1}{2}\sigma^2 u^2
			    +
		    \int_{\mathbb{R}\setminus \{0\}}
		    \left( 
		    		e^{\i u x}-1 - \i u c(x)			\right)			\nu_+(dx)
		\\ -
	    \int_{\mathbb{R}\setminus \{0\}}
		    \left( 
		    		e^{\i u x}-1 - \i u c(x)			\right)			\nu_-(dx)				\Bigr\},
\end{multline}
where \(\nu_+\) and \(\nu_{-}\) are determined by \(\zeta_+\) and \(\zeta_{-}\) as in~\eqref{zeta}.
 Therefore, the characteristic triplet \((\gamma, \sigma^2, \nu)\) with \(\nu=\nu_+ - \nu_{-}\) uniquely determines the QID distribution.
\subsection{Examples}\label{sec22} 

Many interesting examples can be constructed using the following lemma, which is proven in~\cite{LPS2018}, Corollary 3.5.

\begin{lem} \label{magic_lem1}
Consider the mixture $\mu = p\mu_1 + (1-p)\mu_2$, where  $p \in (1/2, 1)$,  $\mu_1$ is a QID distribution with  triplet $(\gamma, \sigma^2, \nu)$, and $\mu_2$ is some distribution on \(\R.\) Suppose  that there exists a finite signed measure $\Lambda$ on $\mathbb{R}$ such that its Fourier transform is equal to 
\begin{eqnarray}\label{mu12}
\F\bigl[\Lambda \bigr](u) = \frac{\phi_2(u)}{\phi_1(u)}, \qquad u \in \R,
\end{eqnarray}
where   $\phi_1$ and $\phi_2$ are the characteristic functions of $\mu_1$ and $\mu_2$, If this measure is finite with total variation $|\Lambda|(\mathbb{R}) < p/(1-p)$, then $\mu$ is QID with the characteristic triplet \[\Bigl(\gamma + \int_{\mathbb{R}}c(x)\Tilde{\nu}(\mathrm{d}x),  \quad \sigma^2, \quad  \nu + \Tilde{\nu}\Bigr),\]
where $\Tilde{\nu}$ is a finite signed measure defined as follows: 
\begin{equation}\label{Tildenu}
    \Tilde{\nu}(B) = \sum_{m = 1}^{\infty} \frac{(-1)^{m + 1}}{m}\Bigl( 
\frac{1-p}{p} 
\Bigr)^m\Lambda^{*m}(B), \qquad B \in \B(\R\setminus\{0\}),
\end{equation}
with $\Lambda^{*m}(B)$ being the convolution of the measure $\Lambda$ with itself  $m$ times.\end{lem}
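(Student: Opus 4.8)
The plan is to carry out the whole argument at the level of characteristic functions, factoring the mixture so that the ``correction'' factor becomes a convergent logarithmic series whose $m$-th coefficient is exactly the convolution power $\Lambda^{*m}$ appearing in \eqref{Tildenu}. First I would write
\[
\phi(u) = p\phi_1(u)+(1-p)\phi_2(u) = \phi_1(u)\,h(u), \qquad h(u):= p + (1-p)\frac{\phi_2(u)}{\phi_1(u)} = p + (1-p)\F[\Lambda](u),
\]
using \eqref{mu12}. Since $h(0)=1$ forces $\F[\Lambda](0)=\Lambda(\R)=1$, we may write $h(u)=p\bigl(1+\tfrac{1-p}{p}\F[\Lambda](u)\bigr)$. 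The total-variation hypothesis gives the uniform bound $\tfrac{1-p}{p}|\F[\Lambda](u)|\le \tfrac{1-p}{p}|\Lambda|(\R)<1$ for every $u\in\R$, so $h$ is bounded away from $0$, its (principal, hence distinguished) logarithm is well defined and continuous, and expands as the absolutely convergent series
\[
\Log h(u) = \log p + \sum_{m=1}^{\infty}\frac{(-1)^{m+1}}{m}\Bigl(\tfrac{1-p}{p}\Bigr)^m \F[\Lambda](u)^m .
\]

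Next I would identify each summand. Because the Fourier transform turns convolution into products, $\F[\Lambda](u)^m=\F[\Lambda^{*m}](u)=\int_{\R}e^{\i u x}\Lambda^{*m}(dx)$. The crucial technical step, and the one I expect to be the main obstacle, is to verify that the signed measure $\tilde\nu$ in \eqref{Tildenu} is genuinely well defined and finite. Using $|\Lambda^{*m}|\le|\Lambda|^{*m}$ one gets $|\Lambda^{*m}|(\R)\le|\Lambda|(\R)^m$, whence
\[
\sum_{m=1}^{\infty}\frac{1}{m}\Bigl(\tfrac{1-p}{p}\Bigr)^m|\Lambda^{*m}|(\R) \le \sum_{m=1}^{\infty}\frac{1}{m}\Bigl(\tfrac{1-p}{p}|\Lambda|(\R)\Bigr)^m<\infty,
\]
so the series defining $\tilde\nu$ converges in total-variation norm to a finite signed measure; this is precisely where the condition $|\Lambda|(\R)<p/(1-p)$ is used. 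The same absolute convergence licenses interchanging summation and integration, yielding $\Log h(u)=\log p+\int_{\R}e^{\i u x}\,\tilde\nu(dx)$.

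Finally I would bring this term into L\'evy--Khintchine shape. Splitting $e^{\i u x}=(e^{\i u x}-1-\i u c(x))+1+\i u c(x)$ and integrating against $\tilde\nu$ gives
\[
\int_{\R}e^{\i u x}\,\tilde\nu(dx) = \int_{\R}\bigl(e^{\i u x}-1-\i u c(x)\bigr)\tilde\nu(dx) + \tilde\nu(\R) + \i u\int_{\R}c(x)\,\tilde\nu(dx),
\]
all integrals being finite since $\tilde\nu$ is finite and $c$ is bounded. Evaluating the total mass through $\Lambda^{*m}(\R)=\Lambda(\R)^m=1$ yields $\tilde\nu(\R)=\sum_{m\ge1}\tfrac{(-1)^{m+1}}{m}(\tfrac{1-p}{p})^m=\log(1/p)=-\log p$, which exactly cancels the constant $\log p$. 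Adding the QID representation \eqref{LC3} of $\Log\phi_1$ with triplet $(\gamma,\sigma^2,\nu)$ then collects the drift into $\gamma+\int_{\R}c(x)\tilde\nu(dx)$, leaves the Gaussian coefficient $\sigma^2$ untouched, and merges the jump parts into $\nu+\tilde\nu$. Since $\int_{\R}\min(1,x^2)\,|\tilde\nu|(dx)\le|\tilde\nu|(\R)<\infty$, the measure $\nu+\tilde\nu$ is a bona fide finite signed spectral measure, so $\phi$ admits representation \eqref{LC2}; that is, $\mu$ is QID with the asserted triplet $\bigl(\gamma+\int_{\R}c(x)\tilde\nu(dx),\,\sigma^2,\,\nu+\tilde\nu\bigr)$.
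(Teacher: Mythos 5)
Your proposal is correct and follows exactly the route the paper indicates for this lemma (which it attributes to Lindner, Pan and Sato, Corollary 3.5): factor $\F[\mu](u)=p\,\phi_1(u)\exp\bigl\{\log\bigl(1+\tfrac{1-p}{p}\F[\Lambda](u)\bigr)\bigr\}$ and expand via the Taylor series of $\log(1+z)$, identifying the powers $\F[\Lambda]^m$ with $\F[\Lambda^{*m}]$. Your treatment of the total-variation convergence of the series for $\tilde\nu$, the cancellation $\tilde\nu(\R)=-\log p$, and the L\'evy--Khintchine bookkeeping correctly fills in the details the paper leaves to the cited reference.
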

The proof of Lemma~\ref{magic_lem1} is based on the representation of the characteristic function of \(\mu\) in the form 
\begin{eqnarray*}
\F\bigl[
\mu 
\bigr](u) 
=
p \phi_1(u)
\exp\Bigl\{
\log
\Bigl(
1+ 
\frac{1-p}{p} 
\F\bigl[
\Lambda
\bigr](u) \Bigr)
\Bigr\},
\end{eqnarray*}
and further an application of the Taylor series of the function \(\log(1+z).\)

%
%


Note that the assumption $|\Lambda|(\mathbb{R}) < p/(1-p)$ is fulfilled if \(\Lambda\) is a probability measure, since \(p \in (1/2,1)\). The next two corollaries present specific cases when \(\Lambda\) is a probability measure, and therefore Lemma~\ref{magic_lem1} can be applied.\begin{cor}
(Cuppens theorem.) Any distribution \(\mu\) with an atom of mass larger than 1/2 is QID. 
\end{cor}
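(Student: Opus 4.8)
The plan is to realize $\mu$ as a mixture to which Lemma~\ref{magic_lem1} applies, with the atom playing the role of the (trivially QID) component $\mu_1$. Denote by $a\in\R$ the location of the atom and by $p:=\mu(\{a\})>1/2$ its mass. If $p=1$ then $\mu=\delta_a$ is itself ID and there is nothing to prove, so assume $p\in(1/2,1)$. Since the atom carries mass $p$, the measure $\mu-p\delta_a$ is nonnegative and assigns zero mass to $\{a\}$; setting $\mu_2:=(\mu-p\delta_a)/(1-p)$ therefore yields a genuine probability measure, and we obtain the decomposition $\mu=p\delta_a+(1-p)\mu_2$ with $p\in(1/2,1)$.

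First I would take $\mu_1:=\delta_a$. This degenerate distribution is infinitely divisible (hence QID) with characteristic function $\phi_1(u)=e^{\i a u}$, which never vanishes, and with the trivial triplet $(a,0,0)$; in particular the ratio $\phi_2/\phi_1$ appearing in~\eqref{mu12} is well defined for every $u\in\R$.

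Next I would exhibit the signed measure $\Lambda$ required by the lemma. The natural candidate is the law of $X-a$ with $X\sim\mu_2$, i.e.\ the translate of $\mu_2$ by $-a$. For this choice $\F[\Lambda](u)=\E e^{\i u(X-a)}=e^{-\i a u}\phi_2(u)=\phi_2(u)/\phi_1(u)$, so~\eqref{mu12} is satisfied. Crucially, $\Lambda$ is itself a probability measure, whence $|\Lambda|(\R)=1$.

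It then remains only to check the size condition $|\Lambda|(\R)<p/(1-p)$. Since $p>1/2$ we have $1-p<p$, so that $p/(1-p)>1=|\Lambda|(\R)$, exactly as in the remark preceding the corollary. All hypotheses of Lemma~\ref{magic_lem1} are thus met, and the lemma yields that $\mu$ is QID. I expect no serious obstacle here: the only points demanding care are verifying that $\mu_2$ is a bona fide probability measure (which uses that the atom has mass exactly $p$, so that subtracting $p\delta_a$ leaves a nonnegative measure of total mass $1-p$) and confirming the Fourier identity~\eqref{mu12} for the translate $\Lambda$; both are routine, and the essential content is simply the observation that the atom supplies a non-vanishing characteristic function $\phi_1$ together with a probability measure $\Lambda$.
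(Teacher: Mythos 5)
Your proof is correct and follows essentially the same route as the paper: decompose $\mu = p\delta_a + (1-p)\mu_2$ with $\mu_2 = (\mu - p\delta_a)/(1-p)$, take $\Lambda = \mu_2 * \delta_{-a}$ (the translate, which is a probability measure so that $|\Lambda|(\R) = 1 < p/(1-p)$), and apply Lemma~\ref{magic_lem1}. The only additions beyond the paper's argument are your explicit treatment of the trivial case $p=1$ and the spelled-out verification of the Fourier identity~\eqref{mu12}, both of which the paper leaves implicit.
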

\begin{proof}
In fact, in this case \(\mu = p \mu_1 + (1-p) \mu_2,\) where \(p=\mu(\{\lambda\})>1/2\), \(\mu_1 = \delta_\lambda\), and 
\(\mu_2 = (\mu - p \delta_\lambda)/(1-p).\) Note that \(\mu_2\) is a probability measure, and \(\Lambda=\mu_2 * \delta_{-\lambda}\)   is also a probability measure. We conclude that   \(\mu\)  is a QID distribution with triplet \((\lambda,0, \nu),\)  where 
\begin{eqnarray*}
\nu = \sum_{m=1}^\infty \frac{(-1)^{m+1}}{m}
\Bigl( 
\frac{1-p}{p} 
\Bigr)^m \bigl(  \mu_2 * \delta_{-\lambda} \bigr)^{*m}.
\end{eqnarray*}
\end{proof}
\begin{ex}\label{ex1}
In particular, the Bernoulli distribution \(\Be(p)\) with probability of success \(p \ne 1/2\) is QID. Recall that this distribution is not ID, since its support is bounded (see Proposition~8.1 from~\cite{SvH2004}), except for the trivial cases  $p \in \{0, 1\}$.  Note that \(\Be(1/2)\)  is also not QID, since its characteristic function has real zeros, 
\(\phi(u) = (1+  e^{i u})/2 = 0 \) 
 for \(u = (2k+1) \pi,\) what is not possible due to representation~\eqref{phi12} and the fact that the characteristic function \(\phi_1(\cdot)\)  of an ID distribution doesn't have zeros (see Proposition~2.8 from~\cite{SvH2004}).
 \end{ex}
\begin{cor} \label{cor2}
Consider the mixture $\mu = p\tilde\mu_\sigma + (1-p) \mu^\circ$, where  \(\tilde\mu_\sigma\) is a normal distribution  with parameters \(0\) and \(\sigma^2\), and \(\mu^\circ\) is the distribution with the characteristic function \(\phi^\circ\) such that the function \(\mathcal{H}(u)\) presented in~\eqref{H}
is the characteristic function of some probability measure \(\Lambda\). Then  for any \(p \in (1/2,1]\), \(\mu\) is a QID distribution  with the characteristic triplet \[\Bigl( \int_{\mathbb{R}}c(x)\Tilde{\nu}(\mathrm{d}x),  \quad  \sigma^2, \quad \Tilde{\nu}\Bigr),\]
where  $\Tilde{\nu}$ is given by~\eqref{Tildenu}.
\end{cor}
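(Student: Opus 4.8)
The plan is to apply Lemma~\ref{magic_lem1} directly, with the identification $\mu_1 = \tilde\mu_\sigma$ and $\mu_2 = \mu^\circ$. First I would record that the centered normal $\tilde\mu_\sigma$ is infinitely divisible, hence QID, with characteristic function $\phi_1(u) = e^{-\sigma^2 u^2/2}$ and L\'evy triplet $(\gamma,\sigma^2,\nu)=(0,\sigma^2,0)$; in particular its spectral measure vanishes. Setting $\phi_2 = \phi^\circ$, the ratio appearing in~\eqref{mu12} becomes
\begin{eqnarray*}
\frac{\phi_2(u)}{\phi_1(u)} = \phi^\circ(u)\, e^{\sigma^2 u^2/2} = \H(u),
\end{eqnarray*}
which by hypothesis is the characteristic function of the probability measure $\Lambda$. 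Thus $\F[\Lambda](u) = \phi_2(u)/\phi_1(u)$, so the structural assumption of Lemma~\ref{magic_lem1} is met with this $\Lambda$.

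Next I would verify the total-variation condition. Since $\Lambda$ is a probability measure, $|\Lambda|(\R) = 1$, and for any $p \in (1/2,1)$ one has $p/(1-p)>1$, so $|\Lambda|(\R) < p/(1-p)$ as required (this is exactly the remark made before the corollary). Lemma~\ref{magic_lem1} then yields that $\mu$ is QID with triplet
\begin{eqnarray*}
\Bigl(\gamma + \int_\R c(x)\,\tilde{\nu}(dx),\ \sigma^2,\ \nu + \tilde{\nu}\Bigr),
\end{eqnarray*}
and substituting $\gamma=0$ and $\nu=0$ collapses this to the claimed $\bigl(\int_\R c(x)\,\tilde{\nu}(dx),\ \sigma^2,\ \tilde{\nu}\bigr)$, with $\tilde{\nu}$ given by~\eqref{Tildenu}.

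Finally, the endpoint $p=1$ must be treated separately, since Lemma~\ref{magic_lem1} assumes $p\in(1/2,1)$; this case is trivial because then $\mu = \tilde\mu_\sigma$ is exactly the centered normal, QID with triplet $(0,\sigma^2,0)$, consistent with the formula since $(1-p)/p = 0$ forces $\tilde{\nu}\equiv 0$ in~\eqref{Tildenu}. I do not expect a genuinely hard step here: the whole argument is a specialisation of Lemma~\ref{magic_lem1}. The only points requiring care are the correct identification of the Gaussian L\'evy triplet (so that the additive terms $\gamma$ and $\nu$ drop out and the resulting triplet simplifies), and the observation that the probability-measure hypothesis on $\Lambda$ is precisely what renders the total-variation bound automatic for $p>1/2$.
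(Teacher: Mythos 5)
Your proposal is correct and follows exactly the paper's own argument: for \(p\in(1/2,1)\) apply Lemma~\ref{magic_lem1} with \(\mu_1=\tilde\mu_\sigma\) (triplet \((0,\sigma^2,0)\)) and \(\Lambda\) the probability measure with characteristic function \(\mathcal{H}\), so that the total-variation bound holds automatically, and handle \(p=1\) separately as the trivial Gaussian case. The only difference is that you spell out the details (the ratio \(\phi_2/\phi_1=\mathcal{H}\), the collapse of the triplet formula) that the paper's two-sentence proof leaves implicit.
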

\begin{proof}
For \(p \in (1/2, 1),\) the corollary directly follows from Lemma~\ref{magic_lem1}, since the measure \(\Lambda\) is a probability measure. For \(p=1\) the statement follows from the fact that \(\tilde{\mu}_\sigma\) is an ID distribution with triplet \((0, \sigma^2,0)\).
\end{proof}
\begin{ex}\label{ex2}
In particular, a mixture of two normal distributions $$\mu = p\tilde\mu_{\sigma_1} + (1-p) \tilde\mu_{\sigma_2}$$  is QID if the main component has smaller variance, that is, \(p>1/2\) and \(\sigma_1 < \sigma_2\) or, vice versa, \(p<1/2\) and \(\sigma_1 > \sigma_2\). For instance, in the first case, the function \(\mathcal{H}(u) = e^{- u^2(\sigma_2^2 -\sigma_1^2)} , u \in \R,\) is a characteristic function of the normal distribution with zero mean and variance equal to \(\sigma_2^2 -\sigma_1^2.\) A mixture of two normal distributions with different variances is an example of a QID distribution, which is not ID, because its tail is too thin (see Chapter~VI from~\cite{SvH2004}). 
\end{ex}
\begin{ex} \label{ex3}
More generally, the assumptions of Corollary~\ref{cor2} are fulfilled if \(\mu^\circ\) is QID with characteristic triplet \((\gamma, \bar\sigma^2, \nu)\) if \(\bar{\sigma} > \sigma.\) In fact, due to the L{\'e}vy\,--\,Khintchine formula~\eqref{LC3}, \(\mathcal{H}(u)=\phi^\circ(u) e^{u^2 \sigma^2 /2}\) is again a characteristic function of a QID distribution with characteristic triplet \((\gamma, \bar\sigma^2 - \sigma^2, \nu).\) 
\end{ex}
\section{Fourier-based estimation procedure}
\label{sec3}
%
In this section we extend the Fourier-based estimation procedure for the ID distributions described in~\cite{BR2015} to the QID distributions.

For simplicity we consider  the case when the jump parts of  \(Y\) and \(Z\) in \eqref{def1} have absolutely continuous distribution of compound Poisson type. In terms of the  measures $\nu_{+}$ and $\nu_{-}$, this means that for any \(B \in \B(\mathbb{R}),\)
\begin{equation*}
    \nu_{+}(B) = \lambda_{+}\int_{B}p_{+}(x)\mathrm{d}x, \qquad \nu_{-}(B) = \lambda_{-}\int_{B}p_{-}(x)\mathrm{d}x,
\end{equation*}
where \(\lambda_{\pm} = \nu_{\pm}(\R) \geq 0\), and 
$p_{\pm}$  are some probability  density functions.  Note that the quasi-L{\'e}vy measure \(\nu\) has a density \(s(x) = \lambda_+ p_+(x) - \lambda_- p_-(x)\), \(x\in \R.\)


In our setup, formula~\eqref{LC3} can be simplified to 
\begin{eqnarray}\nonumber     \phi(u)
    &=&
    \exp\Bigl\{
      \i \gamma^* u-\frac{1}{2}\sigma^2 u^2
			    +
		    \int_{\mathbb{R}}
		    \left( 
		    		e^{\i u x}-1	\right)			\nu_+(dx) -
	    \int_{\mathbb{R}}
		    \left( 
		    		e^{\i u x}-1 	\right)			\nu_-(dx)				\Bigr\}\\		\label{phiexp}
&=&  \exp\Bigl\{
      \i \gamma^* u-\frac{1}{2}\sigma^2 u^2
			    +
\F\bigl[s \bigr](u) -  \lambda^*\Bigr\},
\end{eqnarray}
where 
\begin{eqnarray*}
    \gamma^* = \gamma - \int_{\R} c(x)s(x) dx, \qquad 
    \lambda^* = \nu(\R) = \lambda_{+} - \lambda_{-},
\end{eqnarray*}
and
$\F[s](u) = \int_\R e^{\i u x} s(x) dx$ is the Fourier transform of \(s.\) Since \[\int_\R s(x) dx = \lambda^* <\infty,\]  the  Riemann-Lebesgue lemma yields
\(
  \bigl| \F[s](u) \bigr| \rightarrow 0\) as \(u \to \infty\).
Therefore, as \(u \to \infty\), 
\begin{eqnarray}\label{e1}
        \Re \bigl( \log(\phi(u)) \bigr) &=&  -\frac{1}{2} \sigma^2 u^2 - \lambda^{*} +o(1), \\
        \label{e2}
        \Im \bigl( \log(\phi(u)) \bigr) &=&\gamma^{*}u + o(1).
\end{eqnarray}

These ideas give rise for the following 4-step estimation procedure for the estimation of the parameters \(\gamma^*,\sigma, \lambda^* \) and the density \(s\) from the observations \(X_1,..,X_n\) of the corresponding QID distribution. 
\begin{enumerate}
  \item First, the characteristic function of \(X\) is estimated by \begin{equation}\label{em_char_func}
      \phi_n(u) = \frac{1}{n}\sum_{k=1}^{n} e^{iuX_k}.
  \end{equation}
  \item Then, motivated by~\eqref{e1},  introduce the estimate  
  \begin{eqnarray}\nonumber
\bigl( \sigma_n^2, \lambda^*_n\bigr) &=& \argmin_{\sigma^2, \lambda^* } \int_{\R_+} w^{U_n}(u) 
\Bigl[
\Re\bigl(\log(\phi_n(u)) \bigr) 
+\frac{1}{2} \sigma^2 u^2  +\lambda^* 
\Bigr]^2 du,\\
&&\label{sigmalambda}
\end{eqnarray}
where \(w^{U_n}(u)=U_n^{-1} w(u/U_n)\) with some unbounded increasing sequence of positive numbers \(U_n\), and a function \(w\) is supported on \([\eps,1]\), \(\eps>0,\) and belongs to the class \(L^1([\eps,1] )\). Direct calculations yield that the solution $\sigma^2_n$ of~\eqref{sigmalambda} can be represented in the form
\begin{equation}\label{sigma_sol}
    \sigma^2_n = \int_{\R_+}w^{U_n}_{\sigma^2}(u) \Re\bigl(\log(\phi_n(u)) \bigr)du,
\end{equation}
where
\begin{equation*}
    w^{U_n}_{\sigma^2}(u) = \frac{2 w^{U_n}(u) \Bigl( u^2\int_{\R_+} w^{U_n}(s) ds - \int_{\R_+} w^{U_n}(s) s^2 ds\Bigr)}{\bigl(\int_{\R_+} w^{U_n}(s) s^2 ds\bigr)^2 - \int_{\R_+} w^{U_n}(s) s^4ds \cdot \int_{\R_+} w^{U_n}(s) ds}.
\end{equation*}
Note that $w^{U_n}_{\sigma^2}(u)=U_n^{-3}w^{1}_{\sigma^2}(u/U_n)$, and moreover,
\begin{equation}\label{prop1}
    \int_{\R_+} w^{U_n}_{\sigma^2}(u) du = 0, \qquad \int_{\R_+} \frac{-u^2}{2} w^{U_n}_{\sigma^2}(u) du = 1.
\end{equation}
Similarly,
\begin{equation}\label{lambda_sol}
    \lambda^*_n = \int_{\R_+}w^{U_n}_{\lambda^*}(u) \Re\bigl(\log(\phi_n(u)) \bigr)du,
\end{equation}
where the function $w^{U_n}_{\lambda^*} (u) = U_n^{-1}w^{1}_{\lambda^*}(u/U_n)$ has the following properties
\begin{equation*}
    \int_{\R_+} -w^{U_n}_{\lambda^*}(u) du = 1, \qquad \int_{\R_+} \frac{u^2}{2} w^{U_n}_{\lambda^*}(u) du = 0.
\end{equation*}
 \item Analogously, motivated by~\eqref{e2}, introduce the estimate
 \begin{eqnarray}\label{gamma_opt}
\gamma^*_n
= \argmin_{\gamma^*} \int_{\R_+} w^{V_n}(u) 
\Bigl[
\Im\bigl(\log(\phi_n(u)) \bigr) 
-\gamma^* u
\Bigr]^2 du,
\end{eqnarray}
where \(V_n\) is some unbounded increasing sequence of positive numbers, and the function \(w\)  can be defined in the same way as in the previous step.
The solution of this problem is given by 
\begin{equation}\label{gamma_sol}
    \gamma^*_n = \int_{\R_+}w^{V_n}_{\gamma^*}(u) \Im\bigl(\log(\phi_n(u)) \bigr)du,
\end{equation}
where the function $w^{U_n}_{\gamma^*}(u)=U_n^{-2}w^{1}_{\gamma^*}(u/U_n)$ satisfies
\begin{equation*}
    \int_{\R_+} u w^{U_n}_{\gamma^*}(u) du = 1.
\end{equation*}
   
  \item Finally, we turn towards nonparametric estimation of the density \(s\). From~\eqref{phiexp}, we get that the natural estimate of \(s\) is the inverse Fourier transform, that is,
\begin{eqnarray}
\nonumber
s_n(x) 
&=& \F^{-1}\Bigl[\bigl(
\log(\phi_n(\cdot)) - \i \gamma^*_n(\cdot) +\frac{1}{2} \sigma_n^2 (\cdot)^2
 + \lambda^*_n \bigr)w_s(\cdot / T_n)\Bigr] (x), \label{p}
\end{eqnarray}
where $x \in \R$ and $w_s$ is a weight function supported on $[-1, 1]$, and  \(T_n\) is some unbounded increasing sequence of positive numbers.
\end{enumerate}

\section{Semiparametric inference for the model~\eqref{model1}}
\label{sec4}
As we have seen in Corollary~\ref{cor2}, the measure
\begin{eqnarray}\label{model2}
\mu &=&p  \tilde\mu_{\sigma} + (1-p) \mu^\circ.
\end{eqnarray}
 is QID, if \(p>1/2\) and the function \(\mathcal{H}\) defined by~\eqref{H} is a characteristic function of some distribution.  In this section, we propose a semiparametric estimation procedure for the estimation of the known parameters \(p\) and \(\sigma^2\) and unknown measure \(\mu^\circ \), which is assumed to be absolutely continuous. 
 \newline\newline
\!\!\!\textit{Parametric part.} Let us point out two important facts related to the model~\eqref{model2}.
\begin{enumerate}
\item  In this particular case, the measure \(\Lambda\), which plays an essential role in Lemma~\ref{magic_lem1}, is a probability measure. Due to this fact,  formula~\eqref{Tildenu} yields
\begin{eqnarray*}
\lambda^*=
\Tilde\nu (\R) = \sum_{m=1}^\infty \frac{(-1)^{m+1}}{m}
\Bigl( 
\frac{1-p}{p} 
\Bigr)^m \underbrace{\Lambda^{*m}(\R)}_{=1} = -\log(p), 
\end{eqnarray*}
where the last equality follows from \((1-p)/p \in (0,1).\)
\item The second element of the characteristic triplet, \(\sigma^2\), coincides with the variance of the first (normal) component in~\eqref{model2}.
\end{enumerate}
Therefore, the application of the first and second steps from the previous section results into the estimation of \(p = e^{-\lambda^*}\) by the natural plug-in estimate \(p_n := e^{-\lambda_n^*}, \) and \(\sigma^2\) by \(\sigma_n^2.\)
\begin{remark} \label{rem2a}
From these two facts, it follows that both parameters \(p\), \(\sigma^2\) and the measure \(\mu^\circ\) can be uniquely recovered from the measure \(\mu\). Indeed, let \((p_1, \sigma_1^2, \mu^\circ_1)\) and \((p_2, \sigma_2^2, \mu^\circ_2)\) be two different vectors of parameters corresponding to the measure $\mu$ from (\ref{model2}). 
    As we have seen in Section~\ref{sec21}, the characteristic triplet corresponding to the QID distribution $\mu$ is unique. The exact form of the characteristic triplet for the mixture model is given in Lemma~\ref{magic_lem1}. From this form it follows that $\sigma^2_1 = \sigma^2_2$. Parameter $p$ is equal to \(e^{-\nu(\R)}\) and therefore depends only on the quasi-L{\'e}vy measure of $\mu$, which is uniquely determined by $\mu$. We conclude that  $p_1 = p_2$. Finally,  \(\mu_j^\circ\), $j = 1, 2$ can be recovered as \(\mu_j^\circ = \bigl( \mu - p_j  \tilde\mu_{\sigma_j} \bigr)/(1-p_j)\), which leads to $\mu_1^\circ = \mu_2^\circ$. Therefore the vectors \((p_1, \sigma_1^2, \mu^\circ_1)\) and \((p_2, \sigma_2^2, \mu^\circ_2)\) are identical.



\end{remark}
\textit{Nonparametric part.} Now we turn towards the non-parametric estimation of the measure \(\mu^\circ\) in~\eqref{model2}. Assuming that this measure (or equivalently \(\mu\)) is absolutely continuous, we first estimate the density \(g\) of  \(\mu\) by the kernel estimate 
\begin{eqnarray*}
g_n(x) =  \frac{1}{nh}\sum_{i = 1}^{n} K\Bigl(\frac{X_i - x}{h}\Bigr), \qquad x \in \R,
\end{eqnarray*}
where $K: \R \to \R_{+}$ is a kernel function satisfying \(\int_{\R} K(u) du = 1\). Then, we apply linear transform to estimate the density \(g^\circ\) of \(\mu^\circ\),
\begin{eqnarray}\label{g_n_estim}
g^{\circ}_{n} (x) := \frac{g_n (x)- p_n \varphi_{\sigma_n}(x) }{1- p_n},
\end{eqnarray}
where \(\varphi_{\sigma_n}\) is the density of the normal distribution with zero mean and variance equal to \(\sigma_n^2.\) However, this estimate has the undesired property of producing negative values. To address this, we propose to use the positive part estimate, which is defined as
\begin{equation}\label{pos_part_g_n}
    g^{\circ +}_{n} (x) = \max(g^{\circ}_{n} (x), 0).
\end{equation}
Note that the estimator \(g^{\circ +}_{n} (x)\) might not be a density function, but the estimators of this type are widely used in the statistical literature, see, e.g., \cite{Tsybakov}.

\section{Convergence analysis}\label{sec5}
\subsection{General case of QID distributions}
\label{sec6}
We begin this section with a  general result, which shows that, similar to the ID case, the rates of the procedure described in Section~\ref{sec3} are logarithmic.
Consider the subclass $\mathcal{S}(r, \bar{\sigma}, C)$ of QID distributions with characteristic functions of the form~\eqref{phiexp}, where the density \(s\) satisfies \(\int_{|x|>1 }|x| s(x)dx < C\), is \(r\)-times differentiable and moreover
\begin{eqnarray}\label{condcond}
\sigma \in (0, \bar\sigma],  \quad |\gamma^*| \leq C, \quad \lambda^* \in [0, C], \quad \max\bigl\{
\| s^{(r)}\|_2, \| s^{(r)}\|_\infty\bigr\} \leq C.
\end{eqnarray}
with some $\bar\sigma, C >0$.
The conditions~\eqref{condcond} are in line with previous research on the statistical inference of ID distributions using Fourier-based methods, see, e.g.,  \cite{BR2015} and \cite{Gugu}.




 It is convenient to establish the rates of convergence conditional to some event \(\A_n\), which we define in the following lemma. The first part of this lemma shows that the probability of this event tends to 1 at a polynomial rate, if a constant \(\chi^\circ\) is small enough. The second part states that the sequence \(\chi_n\), which is essential in the definition of \(\A_n\), converges to zero under certain choice of the sequence \(U_n.\)
\begin{lem}\label{lem1}
Denote the event 
\begin{equation*}
    \mathcal{A}_n = \mathcal{A}_n (\chi^\circ) := \Bigl\{\max_{u \in [-U_n,U_n]} \frac{|\phi_n(u) - \phi(u)|}{|\phi(u)|} \leq \chi_n \Bigr\}, \qquad n=1,2,....
\end{equation*}
with \[\chi_n := \chi^\circ \frac{\sqrt{\log(nU_n^2)/ n}}{\inf_{u \in [-U_n, U_n]}|\phi(u)|},\]
where \(\chi^\circ\) is positive constant.
The following statements hold.
\begin{enumerate}
\item[(i)] If \(\chi^\circ < 1/16,\) then 
 \[\P\{\A_n\} \geq 1-c (\sqrt{n}U_n)^{-\kappa}\] with \(\kappa= (1/(2\chi^\circ)^2-64) / 128>0\) and some positive constant \(c,\) which  depends on \(\E[|X_1|]\) only.
 \item[(ii)] For any distribution from the class \(\mathcal{S}(r, \bar{\sigma}, C)\),  it holds 
 \begin{eqnarray*}
 \chi_n\lesssim  e^{C}
  \sqrt{\frac{\log(nU_n^2)}{n}}
\exp(\frac{1}{2}\overline{\sigma}^2U_n^2),
\end{eqnarray*}
where  $ \lesssim $ stands for inequality up to an absolute constant not depending on \(n\) and the parameters of the class. In particular, if \(U_n \leq \sqrt{\log n /\sigma^2_{max}}\) with any $\sigma_{max} > \overline{\sigma}$, we have  \(\chi_n \lesssim e^C n^{-\lambda}\) with \(\lambda<(1-(\bar{\sigma}/\sigma_{max})^2)/2\).

 \end{enumerate}\end{lem}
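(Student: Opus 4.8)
The plan is to handle the two parts by different means: part (i) is a uniform concentration statement for the empirical characteristic function $\phi_n$ over the growing window $[-U_n,U_n]$, whereas part (ii) is a deterministic lower bound on $\inf_{|u|\le U_n}|\phi(u)|$ for $\phi$ in the class $\mathcal{S}(r,\bar\sigma,C)$, followed by substitution into the definition of $\chi_n$. For part (i) I would first reduce the self-normalised deviation to an unnormalised one: since $|\phi_n(u)-\phi(u)|/|\phi(u)|\le |\phi_n(u)-\phi(u)|/\inf_{|u|\le U_n}|\phi(u)|$, the complement satisfies $\A_n^c\subseteq\{\sup_{|u|\le U_n}|\phi_n(u)-\phi(u)|>\rho_n\}$ with $\rho_n:=\chi^\circ\sqrt{\log(nU_n^2)/n}$, so it suffices to bound the probability of a uniform deviation. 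Pointwise, $\phi_n(u)-\phi(u)=\tfrac1n\sum_k (e^{\i u X_k}-\phi(u))$ is an average of i.i.d.\ centred complex variables of modulus at most $2$, so applying Hoeffding's inequality to the real and imaginary parts gives $\P(|\phi_n(u)-\phi(u)|>t)\lesssim \exp(-c\,n t^2)$.

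To upgrade this to a supremum over the expanding interval I would discretise $[-U_n,U_n]$ on a grid of mesh $\delta$ and cardinality $N\asymp U_n/\delta$, control the oscillation between grid points through the Lipschitz constant of $\phi_n-\phi$, namely $\sup_u|(\phi_n-\phi)'(u)|\le \tfrac1n\sum_k|X_k|+\E[|X_1|]$, and take a union bound. Choosing $\delta\asymp \rho_n/\E[|X_1|]$ forces the oscillation term below $\rho_n/2$ once $\tfrac1n\sum_k|X_k|$ is pinned near its mean $\E[|X_1|]$; this last step is an elementary, separate concentration (or moment) bound and is precisely where the constant $c$ acquires its dependence on $\E[|X_1|]$. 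The union bound then yields a quantity of the form $N\cdot\exp(-c\,n\rho_n^2)$, and inserting the value of $\rho_n$ together with $N\asymp \sqrt n\,U_n$ produces a bound of the shape $(\sqrt n\,U_n)^{-\kappa}$, where $\kappa$ emerges from balancing the exponential gain against the grid cardinality; the assumption $\chi^\circ<1/16$ is exactly the condition guaranteeing that this net exponent $\kappa$ is positive.

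For part (ii) the task is to lower-bound $\inf_{|u|\le U_n}|\phi(u)|$. From the representation \eqref{phiexp} one has $\Re\log\phi(u)=-\tfrac12\sigma^2u^2+\int_{\R}(\cos(ux)-1)\,s(x)\,\rd x$, whence $|\phi(u)|=\exp(\Re\log\phi(u))$. I would bound the Gaussian term by $\tfrac12\sigma^2u^2\le\tfrac12\bar\sigma^2U_n^2$ on the window, and the jump term from below, uniformly in $u$, by a constant determined by the class parameters in \eqref{condcond} (through the total mass $\lambda^*\le C$ and the norm and moment bounds on $s$), which produces the factor $e^{-C}$. This gives $\inf_{|u|\le U_n}|\phi(u)|\gtrsim e^{-C}\exp(-\tfrac12\bar\sigma^2U_n^2)$, and substituting into $\chi_n=\chi^\circ\sqrt{\log(nU_n^2)/n}/\inf_{|u|\le U_n}|\phi(u)|$ yields the first displayed estimate. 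For the ``in particular'' claim I would insert $U_n\le\sqrt{\log n/\sigma_{max}^2}$, so that $\exp(\tfrac12\bar\sigma^2U_n^2)\le n^{\bar\sigma^2/(2\sigma_{max}^2)}$ while $\sqrt{\log(nU_n^2)/n}\asymp n^{-1/2}\sqrt{\log n}$; the product carries exponent $-\tfrac12(1-(\bar\sigma/\sigma_{max})^2)$, and absorbing the $\sqrt{\log n}$ factor into a slightly smaller exponent gives $\chi_n\lesssim e^{C}n^{-\lambda}$ for any $\lambda<(1-(\bar\sigma/\sigma_{max})^2)/2$.

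The main obstacle is the uniform control in part (i): passing from the pointwise exponential inequality to a supremum over the expanding interval with a sharp polynomial exponent. Two points demand care. First, the discretisation error is governed by the data-dependent Lipschitz constant $\tfrac1n\sum_k|X_k|$, whose fluctuation around $\E[|X_1|]$ both fixes the admissible mesh and is the source of the $\E[|X_1|]$-dependence of $c$. Second, the grid cardinality $N\asymp\sqrt n\,U_n$ must be balanced exactly against the exponential tail so that $\kappa>0$, which is the origin of the threshold $\chi^\circ<1/16$. By contrast, part (ii) is a routine, if slightly delicate, estimation of the characteristic exponent via the structural constraints \eqref{condcond}, with the dominant $\exp(\tfrac12\bar\sigma^2U_n^2)$ factor coming entirely from the Gaussian component.
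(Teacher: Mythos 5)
Your part (ii) is correct and is essentially the paper's own argument: the same identity $|\phi(u)|=\exp\bigl\{-\lambda^*-\tfrac12\sigma^2u^2+\Re(\F[s](u))\bigr\}$ coming from \eqref{phiexp}, the same use of the class bounds $\lambda^*\le C$, $\sigma\le\overline{\sigma}$ and the boundedness of $\Re(\F[s](u))$ to get $\inf_{|u|\le U_n}|\phi(u)|\gtrsim e^{-C}e^{-\overline{\sigma}^2U_n^2/2}$, followed by substitution into $\chi_n$ and absorption of the $\sqrt{\log n}$ factor into a slightly smaller exponent. Your opening reduction in part (i), namely $\A_n^c\subseteq\bigl\{\sup_{|u|\le U_n}|\phi_n(u)-\phi(u)|>\chi^\circ\sqrt{\log(nU_n^2)/n}\bigr\}$, is also exactly the paper's first step. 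The difference is that the paper proves nothing beyond this point: it invokes Proposition~3.3 of \cite{BR2015} as a black box, and the specific constants of the statement --- the exponent $\kappa=((2\chi^\circ)^{-2}-64)/128$, the factor $(\sqrt{n}U_n)^{-\kappa}$, the threshold $1/16$, and the dependence of $c$ on $\E[|X_1|]$ --- are inherited verbatim from that proposition.

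Your replacement of this citation by a discretization--Hoeffding--union-bound argument has a genuine gap: the balancing does not close, and your claim that ``$\chi^\circ<1/16$ is exactly the condition guaranteeing that this net exponent $\kappa$ is positive'' is backwards. With $\rho_n=\chi^\circ\sqrt{\log(nU_n^2)/n}$, the pointwise Hoeffding tail is $\exp(-c_0n\rho_n^2)=(nU_n^2)^{-c_0(\chi^\circ)^2}$, where $c_0$ is an absolute constant of order at most $1/16$ for averages of complex variables of modulus $\le 2$ (smaller still once half the threshold is reserved for the oscillation term), while the union bound costs a factor $N\asymp\sqrt{n}\,U_n=(nU_n^2)^{1/2}$. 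The net bound is therefore $(nU_n^2)^{1/2-c_0(\chi^\circ)^2}$, which \emph{diverges} whenever $c_0(\chi^\circ)^2<1/2$; for $\chi^\circ<1/16$ one has $c_0(\chi^\circ)^2\le 2^{-12}$. An argument of this type yields a decaying bound only when $\chi^\circ$ exceeds a large absolute constant --- the regime opposite to the lemma's hypothesis --- and the exponent it produces, roughly $2c_0(\chi^\circ)^2-1$, is \emph{increasing} in $\chi^\circ$, whereas the lemma's $\kappa$ is decreasing in $\chi^\circ$; no bookkeeping of this kind can generate that shape. A secondary defect: you control the oscillation between grid points through the random Lipschitz constant $\tfrac1n\sum_k|X_k|$ and propose to pin it near $\E[|X_1|]$ by an ``elementary concentration or moment bound'', but under the only available assumption $\E[|X_1|]<\infty$, Markov's inequality gives an exception probability of constant order, not one decaying polynomially in $\sqrt{n}U_n$, so this step cannot match the claimed rate either. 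Reproducing the lemma's quantitative form requires the specific inequality of Proposition~3.3 in \cite{BR2015}, which is sharper than what a pointwise Hoeffding bound combined with a union bound can deliver.
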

\begin{proof}
(i) Using a version of Hoeffding's inequality, given in Proposition~3.3 in~\cite{BR2015}, we get 
     \begin{eqnarray*}
         \P\bigl\{\mathcal{A}_n^c \bigr\}&\leq& \P\Bigl\{\max_{u \in [-U_n,U_n]}|\phi_n(u) - \phi(u)| > \chi_n \inf_{u \in [-U_n, U_n]}|\phi(u)|\Bigr\} \\
    &=& \P\Bigl\{ \sqrt{n/\log(nU_n^2)}\max_{u \in [-U_n,U_n]} |\phi_n(u) - \phi(u)| > \chi^\circ \Bigr\}
    \leq c (\sqrt{n}U_n)^{-\kappa},
     \end{eqnarray*}
provided \(\chi^\circ<1/16\).

\item[(ii)] The lower bound for $\inf_{u \in [-U_n, U_n]}|\phi(u)|$ follows from the representation~\eqref{phiexp}
\begin{align*}
    \inf_{u \in [-U_n, U_n]}|\phi(u)| &=  \inf_{u \in [-U_n, U_n]}\Bigl|
    \exp\bigl\{
      \i \gamma^* u-\frac{1}{2}\sigma^2 u^2			    +
\F\bigl[s \bigr](u) -  \lambda^*\bigr\}
\Bigr| \\
    &= \inf_{u \in [-U_n, U_n]}\Bigl[\exp\bigl\{-\lambda^*-\frac{1}{2}\sigma^2u^2 + \Re(\F[s](u))\bigr\}\Bigr]\\
    &\gtrsim e^{-C}\exp(-\frac{1}{2}\overline{\sigma}^2U_n^2),
\end{align*}
where we use that $|\Re(\F[s](u))| \leq 1$.
\end{proof}

Note that the condition \(\int_{|x|>1 }|x| s(x)dx < C\)  is needed to ensure that $\E |X|$ (and therefore the constant \(c\) in (i)) is bounded uniformly on the class $\mathcal{S}(r, \bar{\sigma}, C)$. This condition can be relaxed to \(\int_{|x|>1 }|x| s_+(x)dx < C\), where \(s_+\) is the density of \(\nu_+,\) but the determination of \(s_+\) is not necessary for the rest of the paper.

\begin{thm}\label{thm1}

Let the weight function \(w \in L^1([\eps,1] )\) with  \(\eps>0\), and moreover, the functions \(w^{1}_{\sigma^2}, w^{1}_{\lambda}, w^{1}_{\gamma}\) defined by \eqref{sigma_sol}, \eqref{lambda_sol}, \eqref{gamma_sol},
belong to the class 
\begin{equation*}
\mathcal{P}_r := \Bigl\{ 
f: \;\; \F \bigl[ 
\frac{f(x)}{x^r}\bigr](\cdot)\in L^1 \bigl(\R \bigr)
\Bigr\},
\end{equation*}
the function \(w_s \in L^2([-1,1])\) is \(r\)-H{\"o}lder continuous and satisfies \(w_s(0)=1\). Let the sequence \(U_n=V_n\) and the constant \(\chi_\circ\) be such that \(\P\{\A_n\} \to 1\) and \(\chi_n \to 0\) as \(n \to \infty,\)  see Lemma~\ref{lem1}. 
Then on the event $\A_n$,
\begin{eqnarray*}
\sup_{\mathcal{S}}
\bigl|\sigma^2_n - \sigma^2
\bigr| \lesssim \frac{\mathcal{R}_n}{U_n^2},  \qquad
\sup_{\mathcal{S}}
\bigl|\lambda_n^* - \lambda^*
\bigr| \lesssim\mathcal{R}_n, \qquad
\sup_{\mathcal{S}}
\bigl|\gamma^*_n - \gamma^*
\bigr| \lesssim \frac{\mathcal{R}_n}{U_n},
\end{eqnarray*}
and
\begin{eqnarray*}
\sup_{\mathcal{S}} \int_{\R} \bigl( 
s_n(x) - s(x) 
\bigr)^2 dx \lesssim 
\chi_n^2 T_n + \mathcal{R}_n^2  T_n \Bigl( 1+  \frac{T_n}{U_n^2}  + \frac{T_n^2}{U_n^4}\Bigr)
 + C T_n^{-2r},
\end{eqnarray*}
where
\begin{eqnarray*}
\mathcal{R}_n =
\mathcal{R}_n (C, \overline{\sigma},r) := 
\e^{C}
\frac{\sqrt{\log(nU_n^2)}}{\sqrt{n}}
e^{\frac{1}{2}\overline{\sigma}^2U_n^2} +   \frac{C}{U_n^{r + 1}},
\end{eqnarray*}
and the supremums are taken over all distributions in the considered class \(\mathcal{S}=\mathcal{S}(r, \bar{\sigma}, C).\) 
 \end{thm}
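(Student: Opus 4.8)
The plan is to handle all four estimators through one bias--variance template, built around the fact that the weights $w^{1}_{\sigma^2},w^{1}_{\lambda},w^{1}_{\gamma}$ were engineered so that, integrated against the \emph{polynomial} parts of $\Re(\log\phi)$ and $\Im(\log\phi)$ in \eqref{e1}--\eqref{e2}, they reproduce the target parameter exactly via the moment identities \eqref{prop1} (and their $\lambda,\gamma$ analogues). Thus I would write, e.g.,
\[
\sigma_n^2-\sigma^2=\int_{\R_+}w^{U_n}_{\sigma^2}(u)\bigl[\Re(\Log\phi_n(u))-\Re(\Log\phi(u))\bigr]du+\int_{\R_+}w^{U_n}_{\sigma^2}(u)\,\Re(\F[s](u))\,du,
\]
the first integral being a stochastic term and the second a deterministic bias; the same split applies verbatim to $\lambda_n^*$ and $\gamma_n^*$ using \eqref{lambda_sol}, \eqref{gamma_sol}.

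For the stochastic terms the key preliminary step is the linearization of the logarithm on $\A_n$. Since $|\phi_n(u)-\phi(u)|/|\phi(u)|\le\chi_n<1/2$ for $|u|\le U_n$ and large $n$, the ratio $\phi_n/\phi$ stays in a disc about $1$ that avoids the branch cut, so the distinguished logarithms obey $\sup_{|u|\le U_n}|\Log\phi_n(u)-\Log\phi(u)|\le 2\chi_n$. Each stochastic term is then at most $2\chi_n\,\|w^{U_n}_{\bullet}\|_{L^1(\R_+)}$, and the scalings $w^{U_n}_{\sigma^2}=U_n^{-3}w^1_{\sigma^2}(\cdot/U_n)$, $w^{U_n}_{\lambda}=U_n^{-1}w^1_{\lambda}(\cdot/U_n)$, $w^{U_n}_{\gamma}=U_n^{-2}w^1_{\gamma}(\cdot/U_n)$ contribute the factors $U_n^{-2},U_n^{0},U_n^{-1}$, matching the $\chi_n/U_n^2$, $\chi_n$, $\chi_n/U_n$ parts of the claimed bounds.

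For the bias terms I would convert the smoothness of $s$ into decay of $\F[s]$ through $\F[s](u)=(-\i u)^{-r}\F[s^{(r)}](u)$ and then move $u^{-r}$ onto the weight: with $h(u):=w^{U_n}_{\bullet}(u)/u^{r}$, Parseval gives $\int w^{U_n}_{\bullet}(u)\F[s](u)\,du=\i^{r}\int h(u)\F[s^{(r)}](u)\,du=\i^{r}\int s^{(r)}(x)\F[h](x)\,dx$, so the bias is at most $\|s^{(r)}\|_\infty\|\F[h]\|_1\le C\|\F[h]\|_1$. The membership $w^1_{\bullet}\in\mathcal{P}_r$ is exactly what guarantees $\F[w^1_{\bullet}(\cdot)/(\cdot)^r]\in L^1$, and the scaling of $h$ produces $C/U_n^{r+1}$-type factors, giving bias contributions $C/U_n^{r+3}$, $C/U_n^{r+1}$, $C/U_n^{r+2}$ for $\sigma_n^2,\lambda_n^*,\gamma_n^*$. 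Adding the stochastic parts, recalling from Lemma~\ref{lem1}(ii) that $\chi_n$ is of the order of the first summand of $\mathcal{R}_n$, yields the three displayed parameter rates.

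Finally, for the density I would apply Plancherel to \eqref{p}, reducing $\int_\R(s_n-s)^2$ to $\tfrac1{2\pi}\int_\R|\widehat\Psi_n(u)w_s(u/T_n)-\F[s](u)|^2du$ with $\widehat\Psi_n(u)=\Log\phi_n(u)-\i\gamma_n^*u+\tfrac12\sigma_n^2u^2+\lambda_n^*$. Inserting \eqref{phiexp} gives, on $|u|\le T_n$, the identity $\widehat\Psi_n-\F[s]=(\Log\phi_n-\Log\phi)+\i(\gamma^*-\gamma_n^*)u+\tfrac12(\sigma_n^2-\sigma^2)u^2+(\lambda_n^*-\lambda^*)$, plus the high-frequency tail $\F[s](u)(w_s(u/T_n)-1)$. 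Assuming $T_n\le U_n$ so that the logarithm bound covers all of $[-T_n,T_n]$, I would integrate termwise: the log term gives $\chi_n^2T_n$; the constant $\lambda$-term gives $\mathcal{R}_n^2T_n$; the $u$- and $u^2$-terms give $\mathcal{R}_n^2$ times the polynomial factors $T_n(T_n/U_n^2+T_n^2/U_n^4)$ displayed in the statement, obtained by pairing $\int_{-T_n}^{T_n}u^{2}du$ and $\int_{-T_n}^{T_n}u^{4}du$ with the parameter rates $|\gamma_n^*-\gamma^*|\lesssim\mathcal{R}_n/U_n$ and $|\sigma_n^2-\sigma^2|\lesssim\mathcal{R}_n/U_n^2$; and the tail, using $|w_s(v)-1|\lesssim|v|^r$ near $0$ together with $\F[s](u)=(-\i u)^{-r}\F[s^{(r)}](u)$ and $\|s^{(r)}\|_2\le C$, gives $CT_n^{-2r}$. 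I expect the main obstacle to be precisely this density step: the distinguished-logarithm control must be propagated across the \emph{growing} window $[-T_n,T_n]$, and the estimated parameters enter multiplied by $u$ and $u^2$, so their errors are amplified by high powers of $T_n$ before being damped by $U_n$; keeping these competing powers aligned (and ensuring $T_n\le U_n$) is the delicate bookkeeping that drives the final rate.
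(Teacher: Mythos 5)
Your proposal follows essentially the same route as the paper's proof: the same stochastic-plus-bias decomposition for \(\sigma^2_n,\lambda^*_n,\gamma^*_n\), the same Plancherel argument that converts the bias term into \(\|s^{(r)}\|_\infty\,\bigl\|\F\bigl[w^{1}_{\bullet}(\cdot)/(\cdot)^r\bigr]\bigr\|_{L^1}\) times the appropriate power of \(U_n^{-1}\) via membership in \(\mathcal{P}_r\), and the same Parseval split \(J_1+J_2\) for the density estimate. The only differences are immaterial: you bound \(|\log\phi_n-\log\phi|\lesssim\chi_n\) in one step, where the paper writes \(\log\phi_n-\log\phi=(\phi_n-\phi)/\phi+R_n\) and checks that the quadratic remainder is dominated by the linear term, and you make explicit the condition \(T_n\le U_n\) that the paper leaves implicit in its choice of \(T_n\).
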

%
\begin{remark} \label{rem1}
As we will see from the proof, the first summands in the upper bounds for the estimates \(\sigma_n^2, \lambda_n^*, \gamma_n^*\)  appear due to the terms of the form
\begin{eqnarray*}
\int_{\R} a_n(u)\Re\bigl(\log(\phi_n(u)) - \log(\phi(u))\bigr)du, 
\end{eqnarray*} 
where \(a_n(\cdot)\) is either \(w^{U_n}_{\sigma^2}(\cdot), w^{U_n}_{\lambda}(\cdot)\) or \(w^{U_n}_{\gamma}(\cdot)\). These terms are associated with the statistical errors of the estimates. The second summands appear due to 
\begin{eqnarray*}
\int_{\R}a_{n}(u)\F \bigl[s\bigr](u)du,
\end{eqnarray*}
which is closely related to the misspecification error (bias). The choice $$U_n =V_n= \sqrt{\frac{\log n}{\sigma^2_{max}}}, \qquad T_n = \bigl( \log n \bigr)^{\beta},$$ where $\sigma_{max} > \overline{\sigma}$ and \(\beta=\frac{r+1}{2(2r+1)}\), yields the logarithmic convergence rates on $\A_n$, 
\begin{align*}
\sup_{\mathcal{S}(r, \bar{\sigma}, C)}
\bigl|\sigma^2_n - \sigma^2
\bigr| \lesssim \;C
\Bigl(\frac{\log n}{\sigma^2_{max}}\Bigr)^{\frac{-(r + 3)}{2}}, \\[1em]
\sup_{\mathcal{S}(r, \bar{\sigma}, C)}
\bigl|\lambda^*_n - \lambda^*
\bigr| \lesssim \; C \Bigl(\frac{\log n}{\sigma^2_{max}}\Bigr)^{\frac{-(r + 1)}{2}}, \\[1em]
\sup_{\mathcal{S}(r, \bar{\sigma}, C)}
\bigl|\gamma^*_n - \gamma^*
\bigr| \lesssim \; C \Bigl(\frac{\log n}{\sigma^2_{max}}\Bigr)^{\frac{-(r + 2)}{2}},\\
\sup_{\S} \int_{\R} \bigl( 
s_n(x) - s(x) 
\bigr)^2 dx \lesssim 
C \frac{\bigl(\log n\bigr)^{\frac{-r(r+1)}{2r+1}}}{\min\bigl\{\sigma_{max}^{-(r+1)}, 1\bigr\}}.
\end{align*}
As we have seen  in Lemma~\ref{lem1}(i), this choice of \(U_n=V_n\) also guarantees that \(\P\{\A_n\}\) converges to 1 at a polynomial rate.
 \end{remark}
 \begin{remark} Let us briefly show what happens in the case of infinite quasi-L{\'e}vy measure \(\nu.\)  The general idea is to study our approach under model misspecification, that is, to use the estimates designed for finite  quasi-L{\'e}vy measures \(\nu,\)  for the distributions with infinite \(\nu\). For instance, for the estimate \(\sigma_n^2\) we have 
 \begin{eqnarray*}
  \sigma^2_n - \sigma^2 &=& \int_{\R} w^{U_n}_{\sigma^2}(u)\Re\bigl(\log(\phi_n(u)) - \log(\phi(u))\bigr)du \\
&&    \hspace{2cm}+ \Bigl( 
\int_{\R} w^{U_n}_{\sigma^2}(u)\Re\bigl(\log(\phi(u))\bigr)du 
- \sigma^2\Bigr),
\end{eqnarray*}
where, similar to Remark~\ref{rem1}, the first summand is associated with the statistical error, and can be treated in exactly the same way as in the proof of Theorem~\ref{thm1}. The second term is bias, and due to~\eqref{sigma_sol} and \eqref{prop1} it can be represented as  
\begin{align*}
\int_{\R} w^{U_n}_{\sigma^2}(u)\Re\bigl(\log(\phi(u))\bigr)du 
&- \sigma^2\\
&=
\int_{\R} w^{U_n}_{\sigma^2}(u)
\Bigl[\Re\bigl(\log(\phi(u))\bigr)+ \frac{1}{2} \sigma^2 u^2
\Bigr]du \\
&=
\int_{\R} w^{U_n}_{\sigma^2}(u)
\Bigl[\int_{\R} \bigl(
\cos(ux) - 1 
\bigr) \nu(dx) \Bigr]du\\
&=
U_n^{-2}
\int_{\R} w^{1}_{\sigma^2}(v)
\Bigl[\int_{\R} \bigl(
\cos(U_n v x) - 1 
\bigr) \nu(dx) \Bigr]dv.
\end{align*}
 Further consideration of the internal integral depends on assumptions on the quasi-L{\'e}vy measure \(\nu\). For example, following Section~6.1 from~\cite{BR2015}, let us assume a stable-like behaviour of the jump component, that is, 
 \begin{eqnarray*}
\int_{\R} \bigl(
1- \cos(ux) 
\bigr) \nu(dx) = c u^{a_1} +O(u^{a_2}), \qquad u \to \infty,
\end{eqnarray*}
with \(0 \leq a_2 <a_1<2\) and \(c>0\). In this particular case, we get \[\int_{\R} w^{U_n}_{\sigma^2}(u)\Re\bigl(\log(\phi(u))\bigr)du 
- \sigma^2 \lesssim U_n^{a_1-2}.\] This leads to the conclusion that on the event \(\A_n,\)
\begin{align*}
\sup_{\mathcal{S}(r, \bar{\sigma}, C)}
\bigl|\sigma_n^2 - \sigma^2
\bigr| &\lesssim \;
\e^{C}
\frac{\sqrt{\log(nU_n^2)}}{\sqrt{n}U_n^{2}}
e^{\frac{1}{2}\overline{\sigma}^2U_n^2} +   \frac{C}{U_n^{2-a_1}},
\end{align*}
resulting in logarithmic rates of convergence for \(\sigma_n^2\), see Remark~\ref{rem1}.
 \end{remark}

\subsection{Mixture models in the form~\eqref{model1}}
Now we turn towards statistical inference for a subclass of QID distributions in the form~\eqref{model1}, as well as the analysis of the estimates $\sigma^2_n$ and \(p_n\) introduced in Section~\ref{sec4}. The next theorem aims to show that  the convergence rates are polynomial, if the distribution of \(\mu^\circ\) is supersmooth.  

\begin{thm}\label{thm2}
Assume that the assumptions of  Theorem~\ref{thm1} are fulfilled. Consider the subclass \(\mathcal{S}^*(r, \bar{\sigma}, C, \underline{p}) \subset \mathcal{S}(r, \bar{\sigma}, C)\)  of QID distributions in the form~\eqref{model1} such that 
 \(p\geq \underline{p}>1/2\), and the function \(\mathcal{H}\) defined by~\eqref{H} is a characteristic function of some probability measure \(\Lambda\).
 Moreover assume that the distribution \(\Lambda\) is supersmooth in the sense that its characteristic function \(\mathcal{H}\) has exponential tails,
\begin{equation}\label{ss}
\bigl|\mathcal{H}(u) \bigr|  \leq c_1 e^{-c_2 |u|^\gamma}, \qquad u \in \R,
\end{equation}
with some \(c_1, c_2, \gamma >0.\)
Then on the event $\A_n$,
\begin{align*}
\sup_{\mathcal{S}^*}
\bigl|\sigma^2_n - \sigma^2
\bigr| \lesssim \frac{\breve{\mathcal{R}}_n}{U_n^2},  \qquad
\sup_{\mathcal{S}^*}
\bigl|\lambda_n^* - \lambda^*
\bigr| \lesssim\breve{\mathcal{R}}_n, \qquad
\sup_{\mathcal{S}^*}
\bigl|\gamma^*_n - \gamma^*
\bigr| \lesssim \frac{\breve{\mathcal{R}}_n}{U_n},
\end{align*}
and
\begin{eqnarray*}
\sup_{\S} \int_{\R} \bigl( 
s_n(x) - s(x) 
\bigr)^2 dx \lesssim 
\chi_n^2 T_n + \breve{\mathcal{R}}_n^2  T_n \Bigl( 1+  \frac{T_n}{U_n^2}  + \frac{T_n^2}{U_n^4}\Bigr)
 + C T_n^{-2r},
\end{eqnarray*}
where 
\begin{eqnarray*}
\breve{\mathcal{R}}_n =
\breve{\mathcal{R}}_n (C, \tilde{c}_1, c_2, \overline{\sigma}, \gamma) := 
\e^{C}
\frac{\sqrt{\log(nU_n^2)}}{\sqrt{n}}
e^{\frac{1}{2}\overline{\sigma}^2U_n^2} + \tilde{c}_1   \e^{-c_2 \eps^{\gamma} U_n^{\gamma}},
\end{eqnarray*}
and 
\(\tilde{c}_1 = - c_1  \log\bigl(2 \underline{p}-1 \bigr)>0\).
\end{thm}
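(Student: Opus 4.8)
The plan is to reproduce the proof of Theorem~\ref{thm1} almost verbatim, changing only the bias (misspecification) term so as to exploit the supersmoothness assumption~\eqref{ss}. As recorded in Remark~\ref{rem1}, for each of the three parametric estimates the error decomposes into a statistical part $\int_{\R} a_n(u)\,\Re\bigl(\log(\phi_n(u))-\log(\phi(u))\bigr)\,du$ and a bias part $\int_{\R} a_n(u)\,\F[s](u)\,du$, where $a_n$ is $w^{U_n}_{\sigma^2}$, $w^{U_n}_{\lambda}$, or $w^{U_n}_{\gamma}$. Since $\mathcal{S}^*\subset\mathcal{S}(r,\bar\sigma,C)$, the statistical parts are literally the same as in Theorem~\ref{thm1} and, on $\A_n$, are bounded by exactly the same computation, producing the first summand of $\breve{\mathcal{R}}_n$. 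Thus the whole task reduces to sharpening the bound on the bias integrals.

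First I would make $\F[s]$ explicit. Because the distribution belongs to the mixture class~\eqref{model1}, Corollary~\ref{cor2} identifies the quasi-L\'evy measure with $\Tilde{\nu}$ from~\eqref{Tildenu}, where $\Lambda$ is the probability measure with characteristic function $\mathcal{H}$. Taking Fourier transforms termwise (so that $\F[\Lambda^{*m}]=\mathcal{H}^m$) and summing the logarithmic series gives
\[
\F[s](u)=\sum_{m=1}^{\infty}\frac{(-1)^{m+1}}{m}\Bigl(\tfrac{1-p}{p}\Bigr)^m\mathcal{H}(u)^m=\log\Bigl(1+\tfrac{1-p}{p}\,\mathcal{H}(u)\Bigr).
\]
Pulling out one factor of $\mathcal{H}$, using $|\mathcal{H}(u)|\le1$ on the remaining factors, and summing the resulting series, I obtain
\[
|\F[s](u)|\le|\mathcal{H}(u)|\sum_{m=1}^{\infty}\frac1m\Bigl(\tfrac{1-p}{p}\Bigr)^m=-|\mathcal{H}(u)|\log\Bigl(\tfrac{2p-1}{p}\Bigr)\le-\log(2\underline{p}-1)\,|\mathcal{H}(u)|,
\]
the last step because $p\mapsto-\log((2p-1)/p)$ is decreasing and $\log\underline{p}\le0$, so the supremum over $p\ge\underline{p}$ is attained at $\underline{p}$.

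The key remaining observation is a support argument: each weight $a_n(u)=U_n^{-k}w^{1}_{\bullet}(u/U_n)$ is supported on $[\eps U_n,U_n]$ because $w$ lives on $[\eps,1]$, so on this set $|u|\ge\eps U_n$ and~\eqref{ss} gives $|\mathcal{H}(u)|\le c_1 e^{-c_2\eps^{\gamma}U_n^{\gamma}}$. Combining the two displays yields the uniform bias bound $|\F[s](u)|\le\tilde{c}_1\,e^{-c_2\eps^{\gamma}U_n^{\gamma}}$ on $[\eps U_n,U_n]$, with $\tilde{c}_1=-c_1\log(2\underline{p}-1)$. Estimating $\bigl|\int_{\R}a_n\F[s]\bigr|$ by this constant times $\|a_n\|_1$, whose $L^1$ norm scales as $U_n^{-2},U_n^{-1},1$ for the $\sigma^2,\gamma,\lambda$ weights respectively, converts the polynomial bias $C/U_n^{r+1}$ of Theorem~\ref{thm1} into the exponentially small $\tilde{c}_1 e^{-c_2\eps^\gamma U_n^\gamma}$, i.e.\ replaces $\mathcal{R}_n$ by $\breve{\mathcal{R}}_n$. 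The density bound is then inherited from Theorem~\ref{thm1} by the same substitution: the $\chi_n^2 T_n$ term and the $CT_n^{-2r}$ term (which stems only from the $r$-smoothness of $s$ and the $r$-H\"older weight $w_s$) are unchanged, while the terms carrying the parametric errors now use $\breve{\mathcal{R}}_n$. I expect the only delicate step to be the uniform-in-$p$ control of the series constant together with the careful bookkeeping of the weight supports that produces the factor $e^{-c_2\eps^\gamma U_n^\gamma}$; everything else is a transcription of the argument for Theorem~\ref{thm1}.
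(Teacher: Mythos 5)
Your proposal is correct and follows essentially the same route as the paper's own proof: both reduce everything to re-bounding the bias term via the series representation $\F[s](u)=\sum_{m\ge 1}\frac{(-1)^{m+1}}{m}\bigl(\tfrac{1-p}{p}\bigr)^m \bigl(\H(u)\bigr)^m$ from Lemma~\ref{magic_lem1}, exploit $|\H|\le 1$ together with the support of the rescaled weights on $[\eps U_n,U_n]$ to invoke the tail bound~\eqref{ss}, and sum the logarithmic series uniformly over $p\ge\underline{p}$ to obtain the constant $\tilde{c}_1=-c_1\log(2\underline{p}-1)$, while the statistical terms and the $L^2$ bound for $s_n$ are inherited verbatim from Theorem~\ref{thm1}. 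The only cosmetic difference is the order of operations (you sum the series into a pointwise bound $|\F[s](u)|\lesssim|\H(u)|$ before applying~\eqref{ss}, whereas the paper applies~\eqref{ss} termwise inside the sum), which changes nothing of substance.
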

\begin{remark}\label{remark2}
    Let us again, as in Lemma~\ref{lem1}(ii) and Remark~\ref{rem1}, choose
 $$U_n = V_n = \sqrt{\frac{\log n}{\sigma^2_{max}}}, 
$$ where $\sigma_{max} > \overline{\sigma}$. Then 
    \begin{eqnarray*}
\breve{\mathcal{R}}_n \lesssim C_1 \sqrt{\log(n)} n^{-C_2},
\end{eqnarray*}
where 
  \begin{eqnarray}
\label{CC1}
C_1 = \max\{e^{1+C}, \tilde{c}_1\}, \qquad C_2 = \min \Bigl(\frac{\sigma^2_{max} - \overline{\sigma}^2}{2\sigma^2_{max}}, \frac{ c_2\eps^{\gamma}}{\sigma^{\gamma}_{max}} \Bigr) > 0.\end{eqnarray}
Therefore, under this choice of the sequences \(U_n\)    and \(V_n\), the rates of convergence presented in Theorem~\ref{thm2} are polynomial. In particular, on the event \(\A_n\), 
\begin{eqnarray*}
\sup_{\mathcal{S}^*}
\bigl|\sigma^2_n - \sigma^2
\bigr| \lesssim  C_1\sigma_{max}^2  n^{-C_2}.
\end{eqnarray*}
and, moreover, 
\begin{align*}
\sup_{\mathcal{S}^*}
\bigl|p_n - p
\bigr| \lesssim C_1 \sqrt{\log(n)} n^{-C_2}. 
\end{align*}
If we choose 
\begin{eqnarray*}
T_n = n^{\beta}, \qquad \mbox{with} \quad \beta=2C_2/(2r+3),
\end{eqnarray*}
we get that the rate of convergence of \(s_n\) is also polynomial, namely, 
\begin{eqnarray*}
\sup_{\mathcal{S}^*} \int_{\R} \bigl( 
s_n(x) - s(x) 
\bigr)^2 dx \lesssim 
n^{-\min(2r\beta, 2\lambda-\beta)}.
\end{eqnarray*}
Note that \(2\lambda-\beta>0\) due to Lemma~\ref{lem1}(ii) and~\eqref{CC1}.
\end{remark}

\begin{remark} In this remark, we would like to  comment on the 
 assumption~\eqref{ss}. 
Note that~\eqref{ss} holds  with \(\gamma=2\)  for our key examples~\ref{ex2} and \ref{ex3}.  
It is also fulfilled for any distribution in the form~\eqref{model1} such that \(\mu^\circ\) is supersmooth with \(\gamma\geq2.\) Moreover, it is also fulfilled if \(\mu_0\) is a convolution of a supersmooth distribution with any \(\gamma\) (also \( \gamma \in (0,2)\)) and a normal distribution with variance \(\bar{\sigma}^2>\sigma^2\). Note that in the latter case  \(\mu_0\) is also supersmooth with the same \(\gamma.\)
\end{remark}
\begin{remark}The estimates obtained using similar methods  for ID distributions do not converge at a polynomial rate, even when we restrict ourselves to the supersmooth class. As an example, Theorem~2.5 from~\cite{Gugu} shows that for some subset \(\T\) of supersmooth ID distributions, 
\begin{eqnarray}\label{gugu}
\sup_{\T} \E\Bigl[
\bigl(\sigma_n^2 - \sigma^2 \bigr) ^2
\Bigr] \lesssim
\bigl( 
\log n
\bigr)^{-(\gamma+5)/2} \exp\Bigl\{
-2 c_2 ( \log n)^{\gamma/2}
\Bigr\},
\end{eqnarray}
leading to the conclusion that the rate is faster than any power of \(\log n.\) Note that the case \(\gamma\in(0,2)\) is the most important (see p.291 in~\cite{Gugu} or p.25 in  \cite{BT2008I}), and in this case~\eqref{gugu}  does not imply that the rate is polynomial.\end{remark}

The last result of this section shows that the nonparametric estimate of the second component of the mixture~\eqref{model2} has also polynomial rate of convergence, provided that the parameters are estimated with polynomial accuracy.
\begin{thm}\label{thm3}
Assume that the assumptions of Theorem~\ref{thm2} are fulfilled, and the parameter \(U_n\), constants $C_1$ and \(C_2\) are fixed as in  Remark~\ref{remark2}. Consider the subclass \(\mathcal{S}^{**}=\mathcal{S}^{**}(r, \underline{\sigma}, \overline{\sigma}, C, \underline{p}, \overline{p}) \subset \mathcal{S}^*(r, \bar{\sigma}, C, \underline{p}) \) consisting of the distributions with $\sigma \in \bigl[\underline{\sigma}, \overline{\sigma}\bigr]$  and $p \in \bigl[ \underline{p}, \overline{p}\,\bigr]$ for some \(0<\underline{\sigma}<\overline{\sigma}\) and $1/2 < \underline{p} < \overline{p} < 1$. Moreover, assume that $g^\circ$ is twice differentiable and  $g^\circ, (g^\circ)'' \in L^2(\R)$. Let \(K\) be a kernel of order 1 with
\begin{eqnarray*}
\int_\R K^2(u)du < \infty, \qquad \int_\R u^2 K(u) du <\infty.
\end{eqnarray*}
Then the conditional mean integrated square error w.r.t. $\A_n$ for $g_n^{\circ+}$ is given by
\begin{multline}\label{stat}
\sup_{\mathcal{S}^{**}}  \int_{\R} \E \Bigl[ \bigl(g^{\circ+}_{n} (x) - g^{\circ}(x)\bigr)^2 | \A_n\Bigr] dx  \\ 
    \lesssim \frac{1}{(1-\overline{p})^4}\Bigl(\frac{1}{nh}\int_{\R} K^2(u)du + \frac{h^4}{4}\bigl(\int_{\R}|u^2K(u)|du\bigr)^2 
    \int_\R \bigl( g''(x) \bigr)^2 dx
    \\
    + C_3 \log(n) n^{-2 C_2}\Bigr),
\end{multline}
where $C_3 =  C^2_1\bigl( 
\underline{\sigma}^{-1} + \int_{\R} g^2(x) \, dx
\bigr).$
\end{thm}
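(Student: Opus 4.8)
The plan is to pass from the thresholded estimator \(g^{\circ+}_n\) to the linear one \(g^\circ_n\), and then to split the error of the linear estimator into a kernel part and a parametric part controlled by Theorem~\ref{thm2}. Since \(\mu^\circ\) is a probability measure, \(g^\circ\ge0\), so \(g^\circ(x)=\max(g^\circ(x),0)\); as \(t\mapsto\max(t,0)\) is \(1\)-Lipschitz, the definition~\eqref{pos_part_g_n} yields the pointwise contraction \(|g^{\circ+}_n(x)-g^\circ(x)|\le|g^\circ_n(x)-g^\circ(x)|\). Hence it suffices to bound \(\sup_{\mathcal{S}^{**}}\int_\R\E\bigl[(g^\circ_n(x)-g^\circ(x))^2\mid\A_n\bigr]\,dx\) for the linear estimator~\eqref{g_n_estim}.

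\emph{Error decomposition.} Using \(g=p\varphi_\sigma+(1-p)g^\circ\) together with~\eqref{g_n_estim}, a direct computation gives the exact identity
\[
g^\circ_n-g^\circ=\frac{1}{1-p_n}\Bigl[(g_n-g)+\bigl(p\varphi_\sigma-p_n\varphi_{\sigma_n}\bigr)+(p_n-p)\,g^\circ\Bigr].
\]
On \(\A_n\), Remark~\ref{remark2} gives \(|p_n-p|\lesssim C_1\sqrt{\log n}\,n^{-C_2}\to0\), so for \(n\) large \(1-p_n\ge(1-\overline p)/2\) and thus \(( 1-p_n)^{-2}\le 4(1-\overline p)^{-2}\) on \(\A_n\). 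Applying \((a+b+c)^2\le 3(a^2+b^2+c^2)\) and integrating reduces the problem to bounding the conditional \(L^2\) risks of the three summands.

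\emph{The three summands.} For the kernel term the conditioning is harmless: for \(Y\ge0\), \(\E[Y\mid\A_n]\le\E[Y]/\P\{\A_n\}\), and \(\P\{\A_n\}\) is bounded below by Lemma~\ref{lem1}(i); the usual variance--bias split for an order-\(1\) kernel (so that \(\int uK=0\)) then gives \(\E\int(g_n-g)^2\lesssim \frac{1}{nh}\int K^2+\frac{h^4}{4}\bigl(\int|u^2K|\bigr)^2\int(g'')^2\). For the Gaussian term we write \(p\varphi_\sigma-p_n\varphi_{\sigma_n}=p(\varphi_\sigma-\varphi_{\sigma_n})+(p-p_n)\varphi_{\sigma_n}\) and use \(\|\varphi_{\sigma_n}\|_2^2\lesssim\underline\sigma^{-1}\) together with the \(L^2\)-Lipschitz bound \(\|\varphi_\sigma-\varphi_{\sigma_n}\|_2\lesssim|\sigma_n^2-\sigma^2|\), valid uniformly while \(\sigma,\sigma_n\gtrsim\underline\sigma\); the parametric rates \(|\sigma_n^2-\sigma^2|\lesssim C_1 n^{-C_2}\) and \(|p_n-p|\lesssim C_1\sqrt{\log n}\,n^{-C_2}\) from Theorem~\ref{thm2} and Remark~\ref{remark2} then make this term \(\lesssim C_1^2\log(n)\,n^{-2C_2}(\underline\sigma^{-1}+1)\). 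For the last term, \(\int(p_n-p)^2(g^\circ)^2\lesssim C_1^2\log(n)\,n^{-2C_2}\int(g^\circ)^2\), and \(\int(g^\circ)^2\lesssim(1-\overline p)^{-2}(\int g^2+\underline\sigma^{-1})\) via \(g^\circ=(g-p\varphi_\sigma)/(1-p)\). Collecting the three estimates, multiplying by the prefactor \((1-p_n)^{-2}\lesssim(1-\overline p)^{-2}\) (and bounding \((1-\overline p)^{-2}\le(1-\overline p)^{-4}\)), and identifying the parametric contributions with \(C_3\log(n)\,n^{-2C_2}\), \(C_3=C_1^2(\underline\sigma^{-1}+\int g^2)\), reproduces~\eqref{stat}.

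\emph{Main obstacle.} The two delicate points are the dependence between \(g_n\) and \(\A_n\) in the kernel term---handled by the high-probability bound \(\P\{\A_n\}\gtrsim1\) from Lemma~\ref{lem1}(i), which lets one pass to the unconditional MISE---and the uniform-in-class \(L^2\)-Lipschitz estimate for \(\varphi_{\sigma}\) as a function of \(\sigma^2\), whose constant blows up as \(\sigma\to0\); this is exactly why the class \(\mathcal{S}^{**}\) imposes \(\sigma\ge\underline\sigma>0\) and \(p\le\overline p<1\).
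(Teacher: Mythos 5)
Your proof is correct and follows essentially the same route as the paper: the same positive-part contraction, the same algebraic decomposition of \(g^{\circ}_n-g^{\circ}\) (your single term \((p_n-p)g^{\circ}\) is just the paper's \(I_3\) and \(I_4\) combined via \(g^\circ=(g-p\varphi_\sigma)/(1-p)\)), the Tsybakov variance--bias bound for the kernel part, a Taylor/Lipschitz-in-variance bound for the Gaussian part, and the parametric rates from Theorem~\ref{thm2} and Remark~\ref{remark2}. The only notable difference is a point in your favor: you justify replacing the conditional kernel MISE by the unconditional one via \(\E[Y\mid\A_n]\le\E[Y]/\P\{\A_n\}\) together with Lemma~\ref{lem1}(i), a step the paper's proof applies implicitly without comment.
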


\begin{remark}\label{rem6}
Since the bandwidth parameter $h$ does not appear in the last term of the right-hand side of~\eqref{stat}, the method for selecting its optimal value is the same as that used for the standard kernel density estimation problem. Balancing the rates of the first two terms we get that the optimal bandwidth is equal to  $h = C n^{-1/5}$, where \(C\) is any positive constant, which is similar to the result given in Theorem~1.2 in~\cite{Tsybakov}. This selection ensures that the first two terms in the upper bound of Theorem~\ref{thm3} are of order $\mathcal{O}(n^{-4/5})$, resulting in a polynomial convergence rate for the estimate $g^{\circ+}_{n}$.
\end{remark}
%

\section{Numerical study}\label{secnum}
In this section we apply our algorithm to several models of the form~\eqref{model1}. First, we study the classic variance mixture model of two normal distributions. Then we proceed to the Bart Simpson distribution (known also as ``the claw''), which is a mixture of $6$ normal distributions. For both models we compare our approach with the EM algorithm, which is a widely used iterative method for estimating mixture parameters (see McLachlan, Peel \cite{McLachlan}), based on the maximisation of the log-likelihood. For both examples the iterative EM algorithm was terminated when either the maximum number of iterations ($K_{\text{max}} = 200$) was reached or the iteration step improvement in log-likelihood fell below $10^{-6}$.  Finally we analyze the performance of our approach for Example~\ref{ex3}, which involves the convolution of a normal distribution and another ID distribution as a second component of the mixture. We omit the comparison with the EM algorithm for the third model in order to emphasize that the proposed approach is semi-parametric and can estimate the second mixture component without making any initial assumptions about its specific form.

For all the examples below,  we choose the bandwidth parameter equal to $h = \frac{1}{30} n^{-1/5}$, following the discussion in Remark \ref{rem6}. The kernel function $K(x)$ is chosen to be the Epanechnikov kernel, i.e. $K(x) = \frac{3}{4}(1 - u^2), \,\,|u| \leq 1$, which is known to be optimal in a sense of minimizing the mean integrated squared error for the densities that belong to the Sobolev class of order $2$, see Section~1.2.4 from~\cite{Tsybakov}.

\subsection{Two-component normal mixture}
The first considered model is a mixture of two normal distributions, where the resulting density has the following form
\begin{equation*}
    g(x) = p\varphi_{\sigma_1}(x) + (1 - p)\varphi_{\sigma_2}(x), \qquad x \in \R,
\end{equation*}
with $0 < \sigma_1 < \sigma_2$, $1/2 < p < 1$.
The resulting distribution is QID as it was discussed in Example \ref{ex2}. For the numerical example, we fix the parameters of this distribution as $p = 0.75, \, \sigma_1^2 = 0.1, \, \sigma^2_2 = 0.5$.

Figure~\ref{fig1} shows the plot of real parts of logarithms of  $\phi(u)$ (orange line) and several realisations of its estimate $\phi_n(u)$ given by (\ref{em_char_func})  for \(n=1000\). As it can be seen in the graph, the deviation increases with the growth of $u$. This observation influences the choice of $U_n=V_n$, because on the interval \([\eps U_n, U_n]\), one should simultaneously have \(\phi_n(u) \approx \phi(u)\)  and \(\phi(u) \approx \i \gamma^*(u) - (1/2)\sigma^2 u^2 - \lambda^*.\) For instance, for \(n=1000\) these conditions are fulfilled with  \(U_n=V_n=8.\)

\begin{figure}[ht]
    \centering
    \includegraphics[scale=0.25]{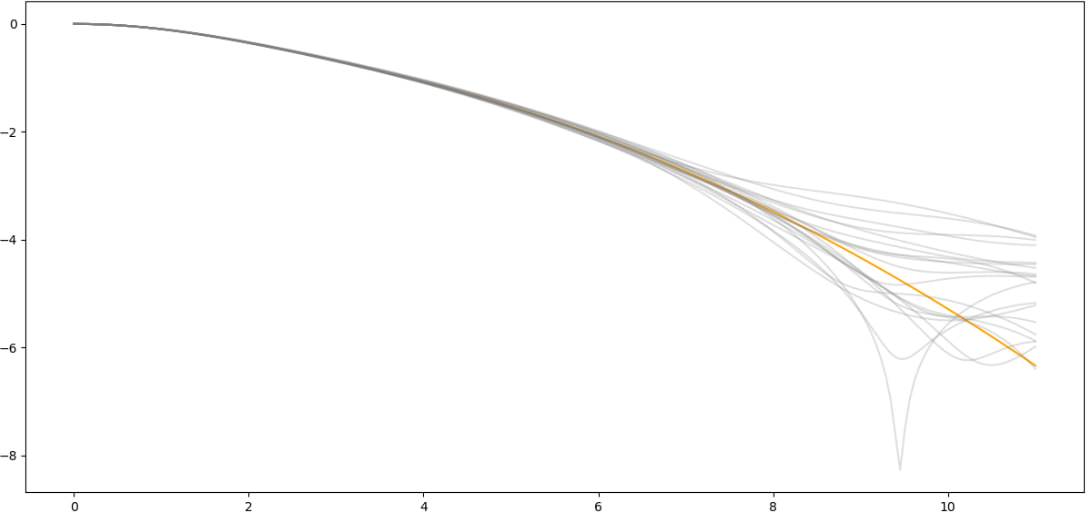}
    \caption{\label{fig1}The plot of $\Re(\log(\phi(u)))$ (orange line) and plots of $N=20$ realisations of its estimate $\Re(\log(\phi_n(u)))$ (grey lines).}
    \label{fig:enter-label}
\end{figure}

Figure~\ref{fig2} shows the quality of the estimates of $\sigma^2_1$ and $p$ obtained by our method and by the EM algorithm based on  \(n\) observations from  $N=50$ simulation runs. The number \(n\) takes the values $1000, 5000, 10000$. Both estimates converge to the true values of the parameters, but the accuracy of our estimate increases with the growth of  \(n\), compared to the estimates obtained by the EM algorithm.
\begin{figure}[h]
    \centering
    \begin{subfigure}[b]{0.45\textwidth}
        \centering
        \includegraphics[scale=0.22]{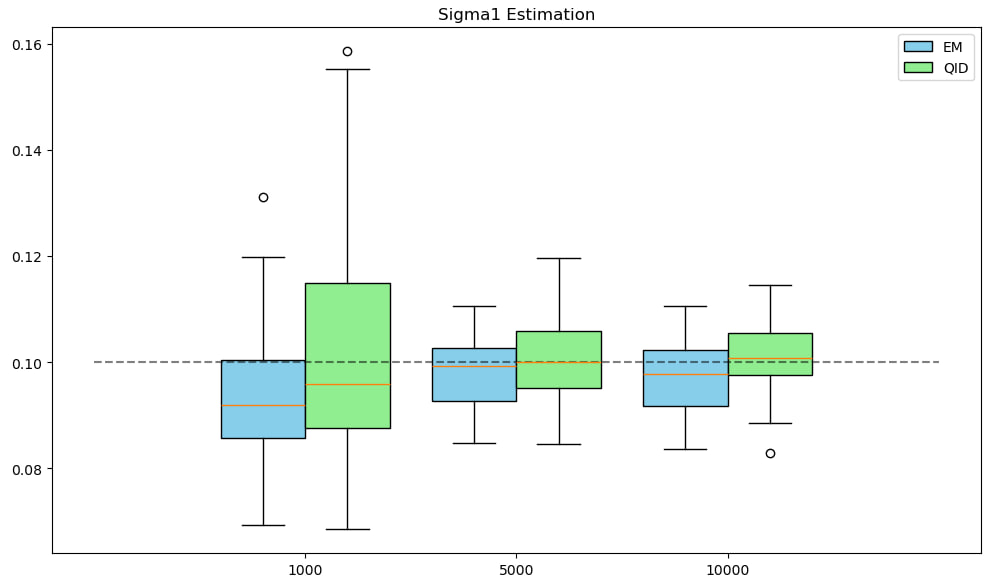}
        \caption{$\widehat{\sigma}_1^2$ }
        \label{fig:image1}
    \end{subfigure}
    \hfill
    \begin{subfigure}[b]{0.45\textwidth}
        \centering
        \includegraphics[scale=0.22]{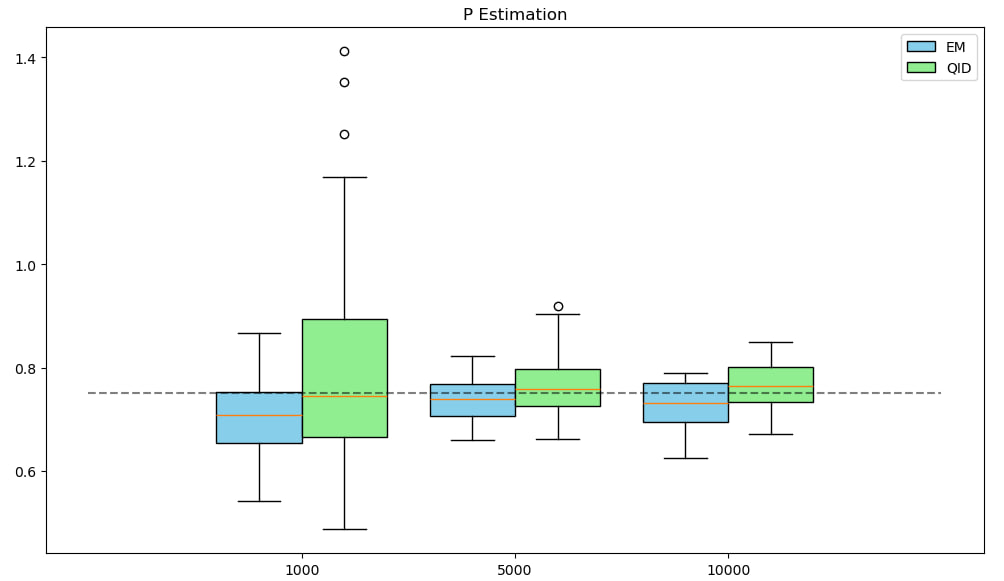}
        \caption{$\widehat{p}$}
        \label{fig:image2}
    \end{subfigure}
    
    \caption{\label{fig2}
    Boxplots for the estimates of \(\sigma_1^2\) and \(p\) obtained by the  EM algorithm (blue boxes) and our approach (green boxes) for the two-component normal mixture model.}
    \label{fig:allimages}
\end{figure}

Lastly, Figure~\ref{fig3} shows the estimates of $g^{\circ}(\cdot)$  obtained by both algorithms for different number of observations $n$. The quality of our algorithm improves as \(n\) increases, resulting in a performance comparable to that of the EM algorithm for large sample sizes.

\begin{figure}[h]
    \centering
    \begin{subfigure}[b]{0.45\textwidth}
        \centering
        \includegraphics[scale=0.2]{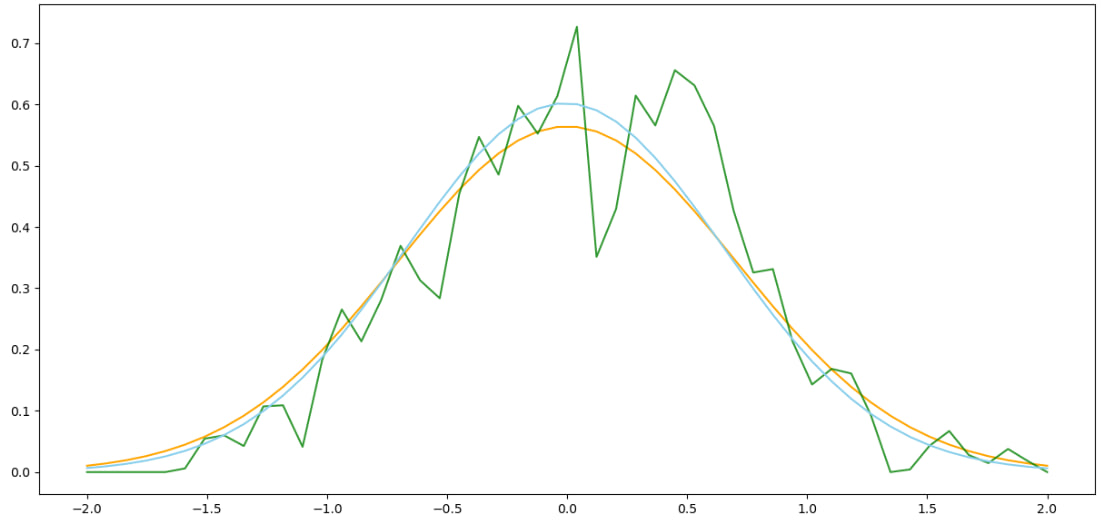}
        \caption{$n = 1000$}
        \label{fig:image1}
    \end{subfigure}
    \hfill
    \begin{subfigure}[b]{0.45\textwidth}
        \centering
        \includegraphics[scale=0.2]{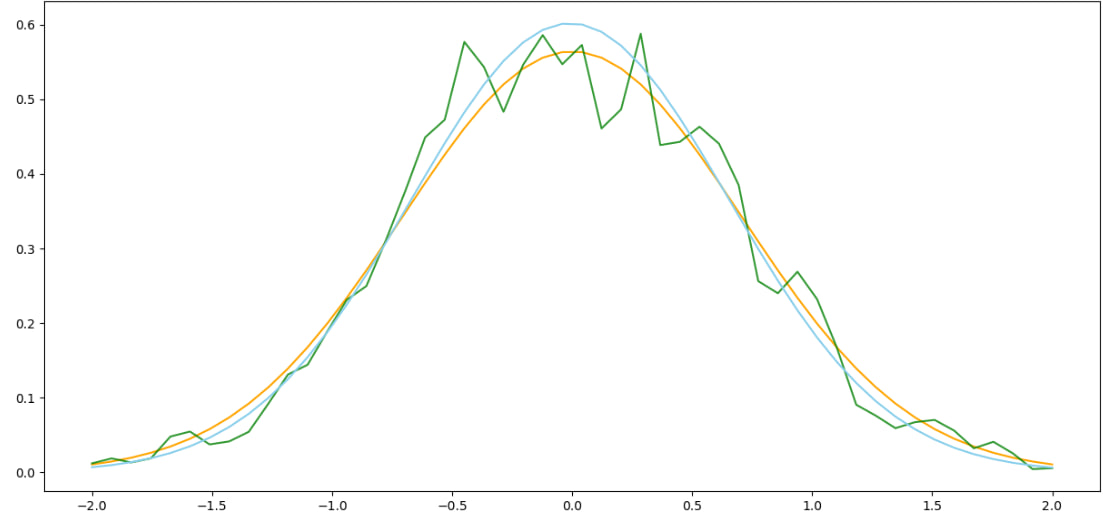}
        \caption{$n = 5000$}
        \label{fig:image2}
    \end{subfigure}
    \begin{subfigure}[b]{0.45\textwidth}
        \centering
        \includegraphics[scale=0.2]{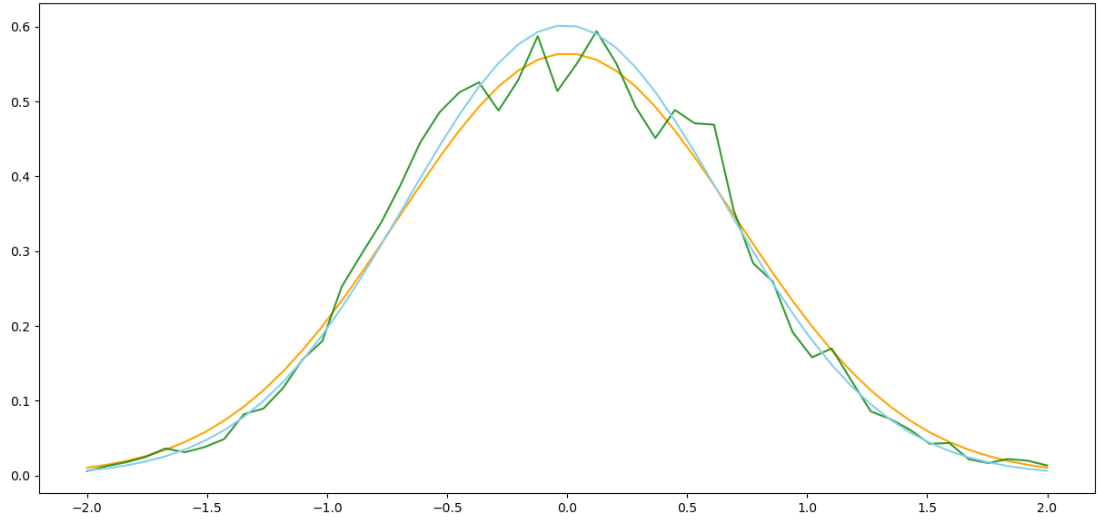}
        \caption{$n = 10000$}
        \label{fig:image3}
    \end{subfigure}
    
    \caption{\label{fig3} Plots of the true function $g^{\circ}(x)$ (orange line) and  its estimates, obtained by the  EM algorithm (blue line) and our inference method (green line) for the two-component normal mixture model.}
    \label{fig:allimages}
\end{figure}

\subsection{Bart Simpson distribution}
In~Example~6.1 from \cite{Wasserman}, the Bart Simpson distribution is defined via the density function
\begin{equation*}
    g(x) = p\varphi_{(0, \sigma_1)}(x) + \frac{1-p}{5}\sum_{j = 0}^{4} \varphi_{((j/2) - 1, \, \sigma_2)}(x),
\end{equation*}
where $p = 0.5, \sigma_1=0.1, \sigma_2=0.01$, and $\varphi_{(m, \sigma)}$ is a density of the normal distribution with parameters $m$ and $\sigma^2$. The name of this distribution is derived from the shape of its probability density function, which resembles the hair on Bart Simpson's head.

With this choice of parameters \(p, \sigma_1, \sigma_2\),  the conditions of Corollary~\ref{cor2} are not satisfied. Nevertheless, if  \(p=0.5+\delta\) with any (small) \(\delta\) and \(\sigma_1<\sigma_2\), then the conditions hold and the distribution is QID.

For this simulation study we take  \(\delta=0.001,  \sigma_1 = 0.05, \sigma_2 = 0.1.\) Figure~\ref{fig4} depicts the histogram of the simulated data for $n=1000$  observations. Note that there are 5 distinct ``spikes'' in the density function, although they are not as pronounced as in the original example.  Below we call this distribution  ``the modified Bart Simpson model''.

\begin{figure}[ht]
    \centering
    \includegraphics[scale=0.25]{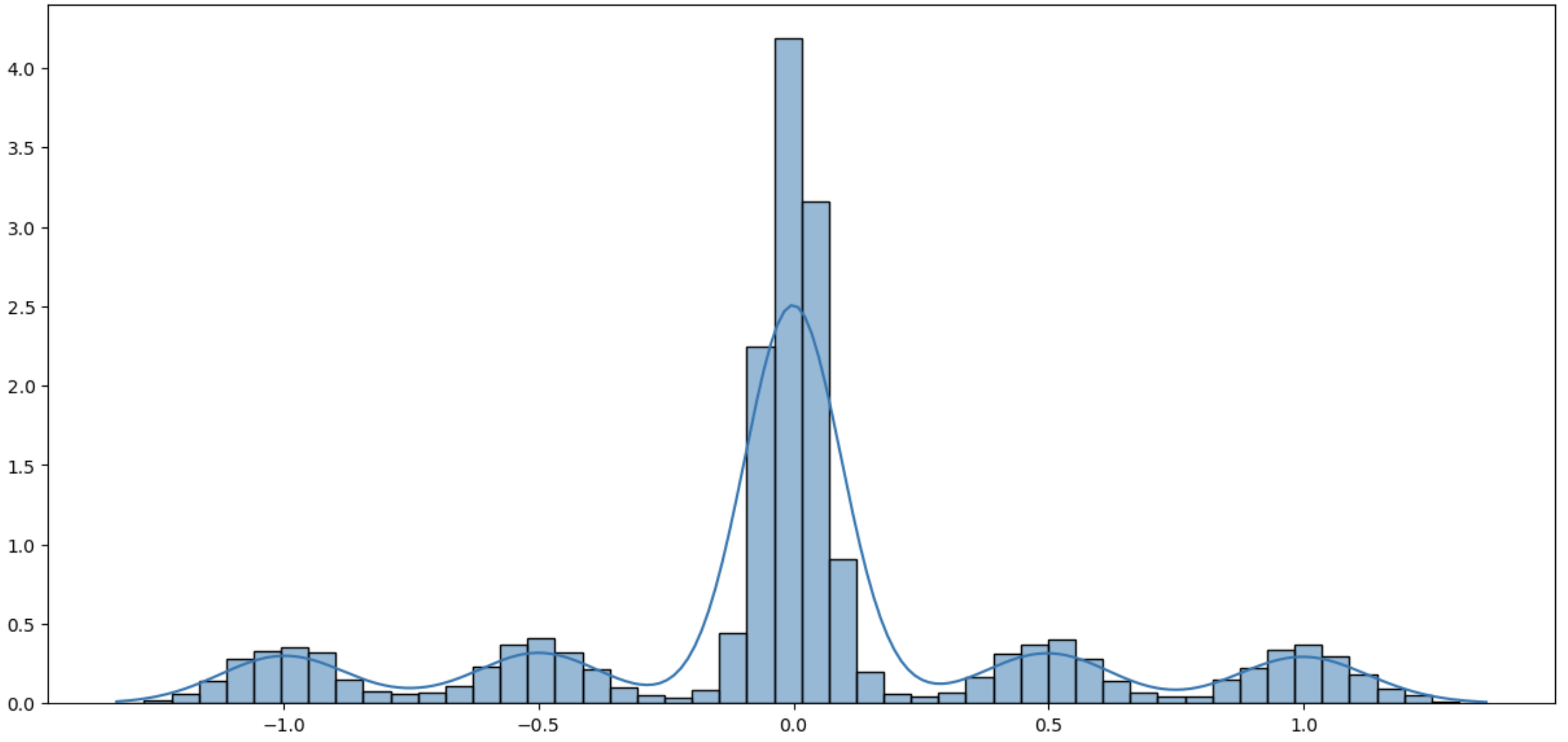}
    \caption{\label{fig4} Histogram from the modified Bart Simpson model and the graph of the true density $g(x)$.}
    \label{fig:enter-label}
\end{figure}

Figure~\ref{fig5} shows the comparison of the estimates of $\sigma_1^2$ and $p^*$  between the EM algorithm and our method  based on  \(n\) observations from  $N=50$ simulation runs.
In contrast to the previous example, the estimate obtained by the EM algorithm has a noticeably smaller deviation and converges more quickly. However,  our inference approach also leads to a good quality of estimation.

\begin{figure}[ht]
    \centering
    \begin{subfigure}[b]{0.45\textwidth}
        \centering
        \includegraphics[scale=0.18]{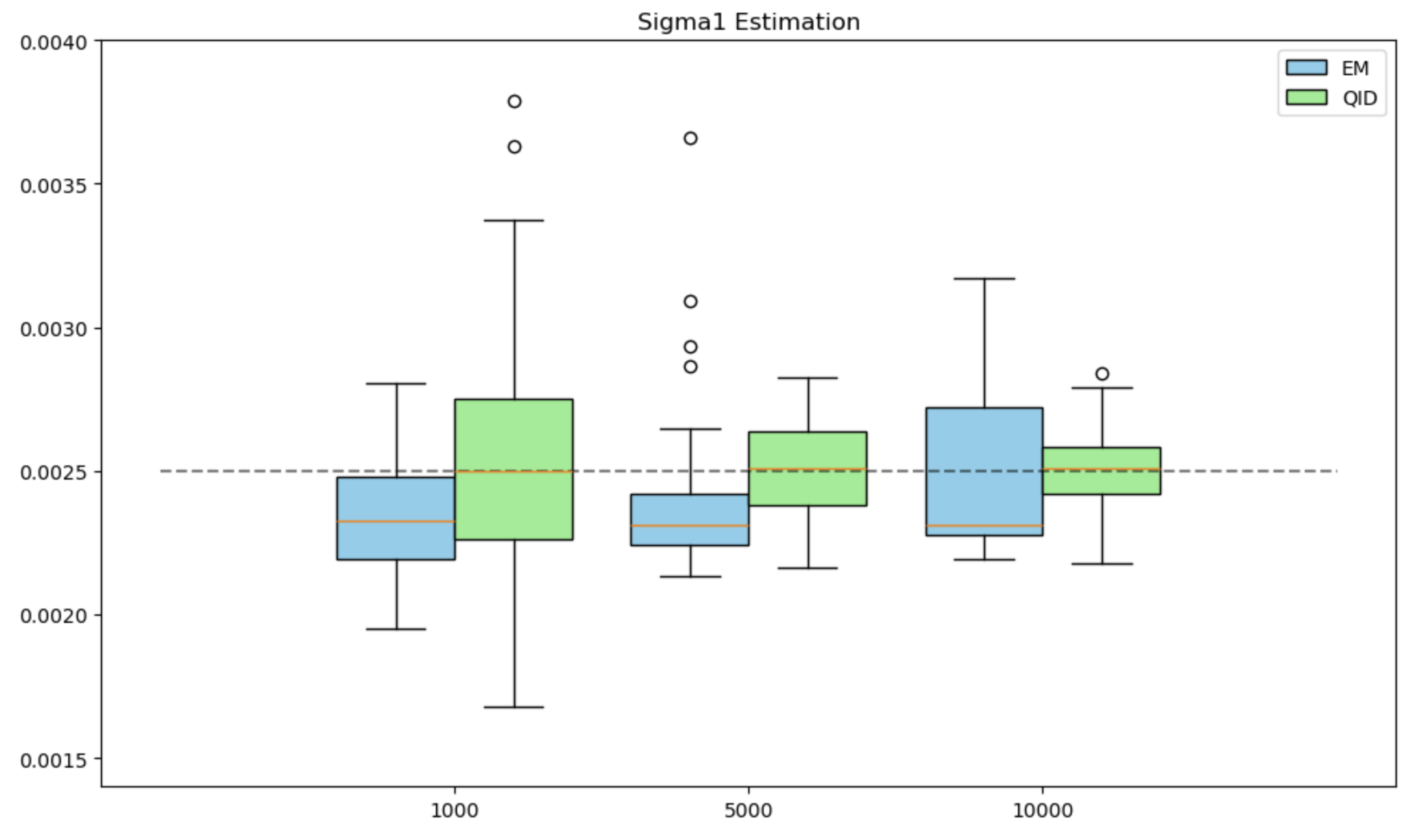}
        \caption{$\widehat\sigma^2_{1}$ }
        \label{fig:image1}
    \end{subfigure}
    \hfill
    \begin{subfigure}[b]{0.45\textwidth}
        \centering
        \includegraphics[scale=0.18]{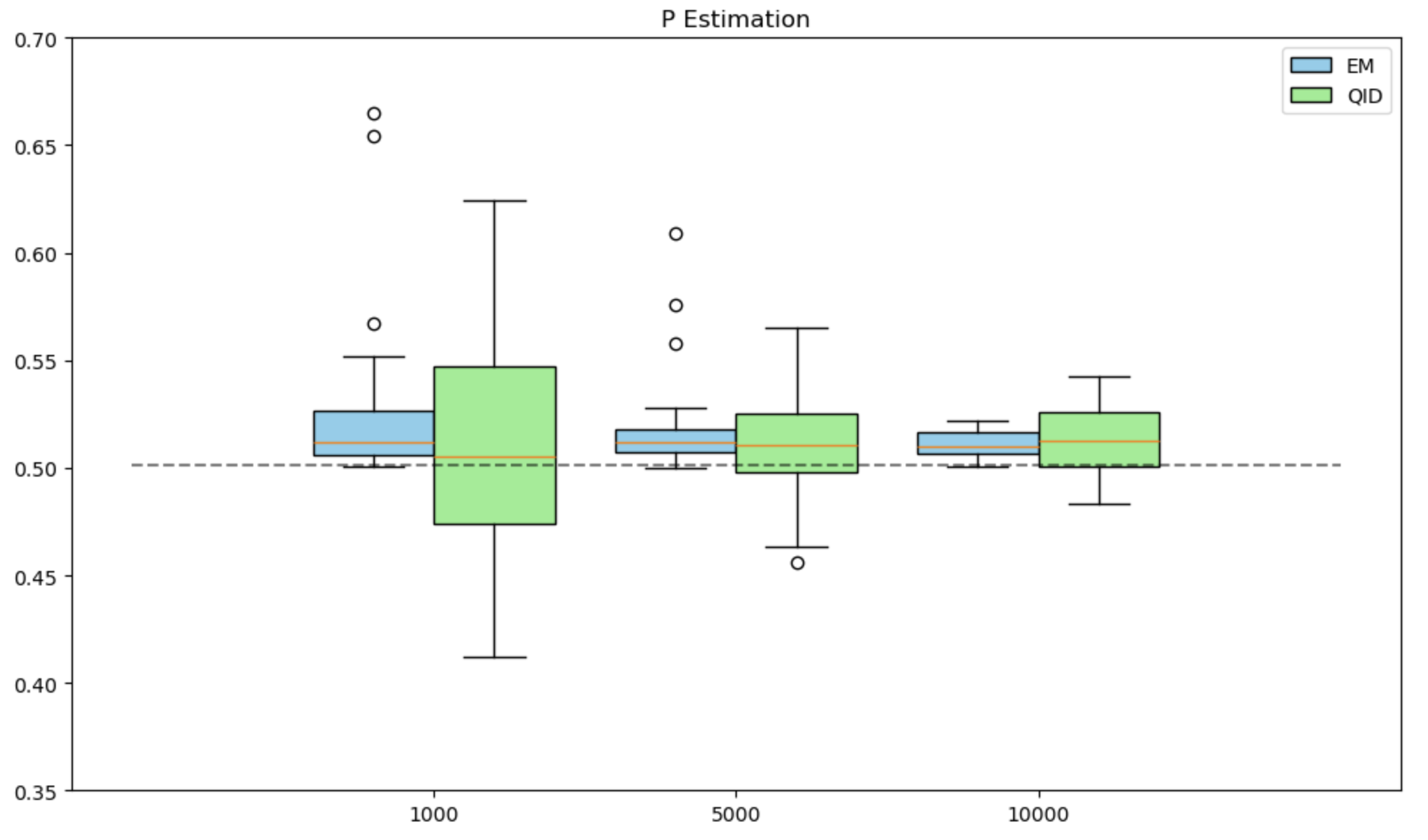}
        \caption{$\widehat{p^*}$}
        \label{fig:image2}
    \end{subfigure}
    
    \caption{\label{fig5}Boxplots for the estimates of \(\sigma_1^2\) and \(p\) obtained by the  EM algorithm (blue boxes) and our approach (green boxes) for the modified Bart Simpson model.}
    \label{fig:allimages}
\end{figure}

Figure~\ref{fig6} presents comparison between true mixture density $g(x)$ and its estimates \(g_n(x) \) obtained by our approach and the EM algorithm. We see that both estimates replicate all 5 spikes of the true density and have similar estimation quality at each of the density spikes.

\begin{figure}[ht]
    \centering
    \includegraphics[scale=0.35]{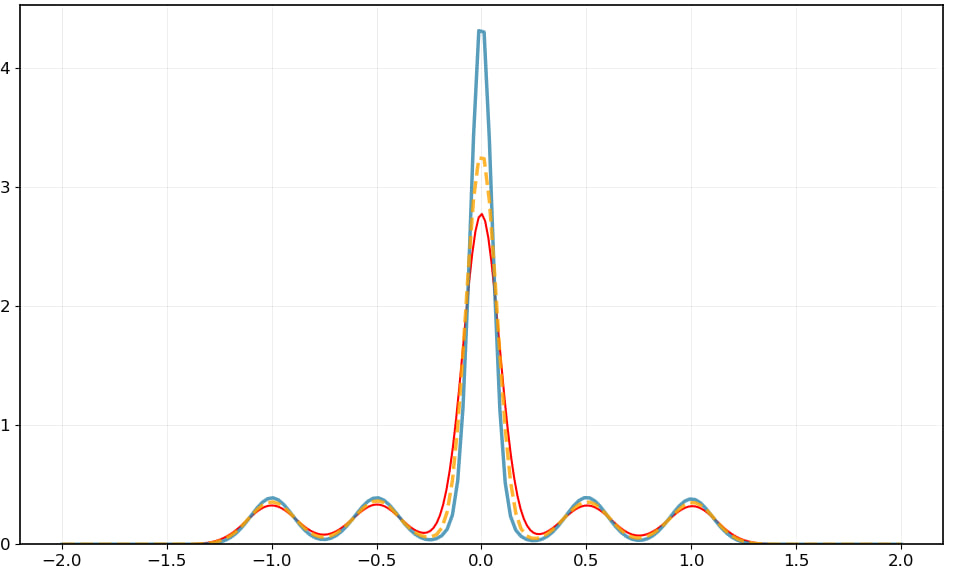}
    \caption{\label{fig6}
    Plots of the true function $g(x)$ (red line) and  its estimates, obtained by the  EM algorithm (blue line) and our inference method (orange line) for the modified Bart Simpson model.}
    \label{fig:enter-label}
\end{figure}


    

\subsection{Another example of model~\eqref{model1}}
In Example \ref{ex3}, we have considered the distribution 
\begin{equation*}
    \mu = p\tilde\mu_{\sigma_1} + (1-p) \mu^\circ,
\end{equation*}
where  \(\mu^\circ\) is ID with L{\'e}vy triplet \((\gamma, \bar\sigma^2, \nu)\). Recall that this distribution is QID, if \(p>1/2,\) and \(\bar{\sigma} > \sigma_1\).  For this numerical example, we fix $\mu^\circ$ as the convolution of standard Student's distribution with $d$ degrees of freedom and a normal distribution  with zero mean and  variance equal to \(\sigma_2^2\), where $0 < \sigma_1 < \sigma_2$. In this study we fix $d=3, p = 0.75$, $\sigma^2_1 = 0.2$, $\sigma^2_2 = 0.5$.

Figure~\ref{fig7} depicts the estimates of $\sigma^2_1$ and $p$  based on  \(n\) observations from  $N=50$ simulation runs. The number \(n\) takes the values
$1000, 5000, 10000$. Here we  again conclude that the estimates converge to the true values and the variances of the estimates decrease with the growth of \(n\).

\begin{figure}[ht]
    \centering
    \begin{subfigure}[b]{0.45\textwidth}
        \centering
        \includegraphics[scale=0.18]{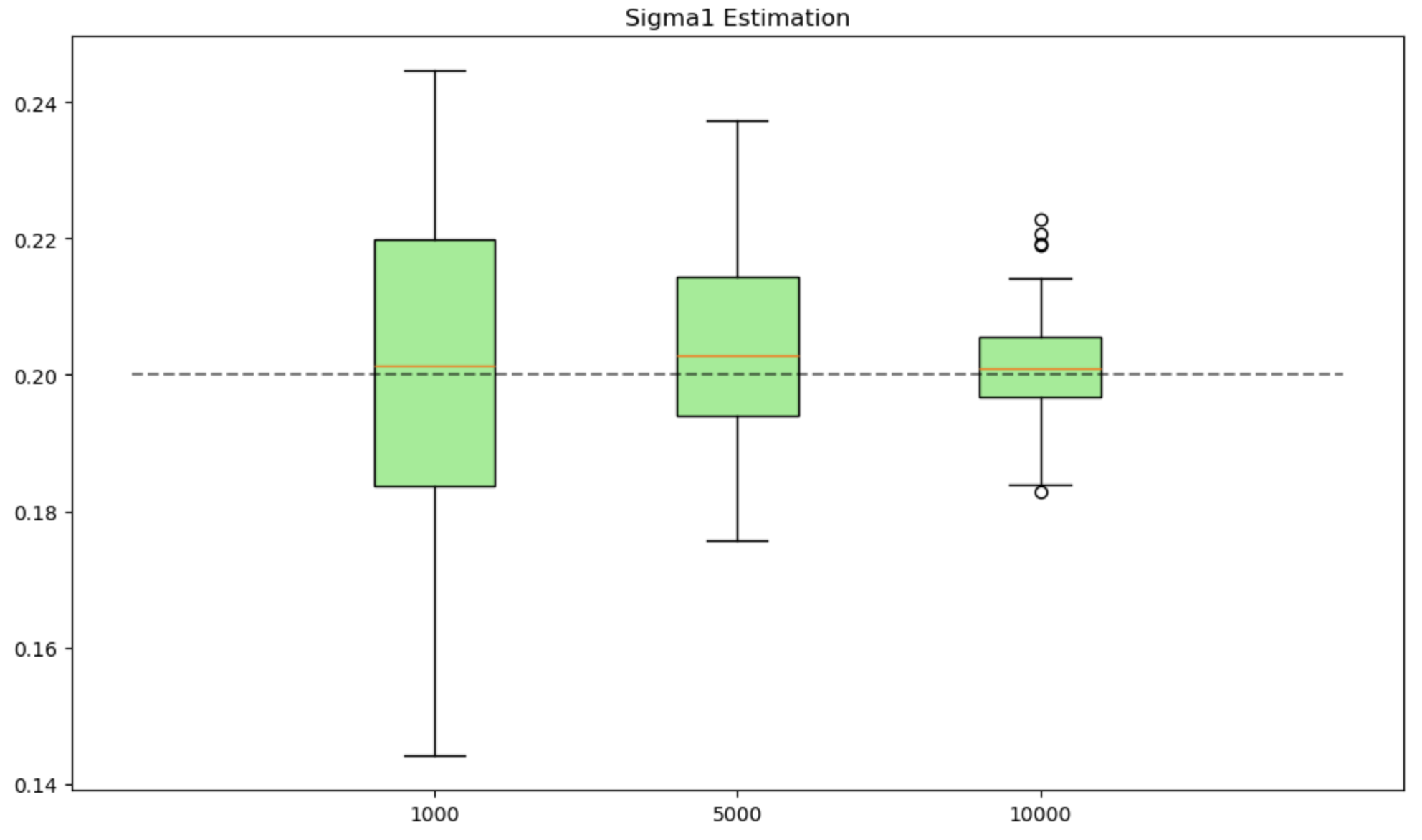}
        \caption{$\widehat\sigma^2_{1}$}
        \label{fig:image1}
    \end{subfigure}
    \hfill
    \begin{subfigure}[b]{0.45\textwidth}
        \centering
        \includegraphics[scale=0.18]{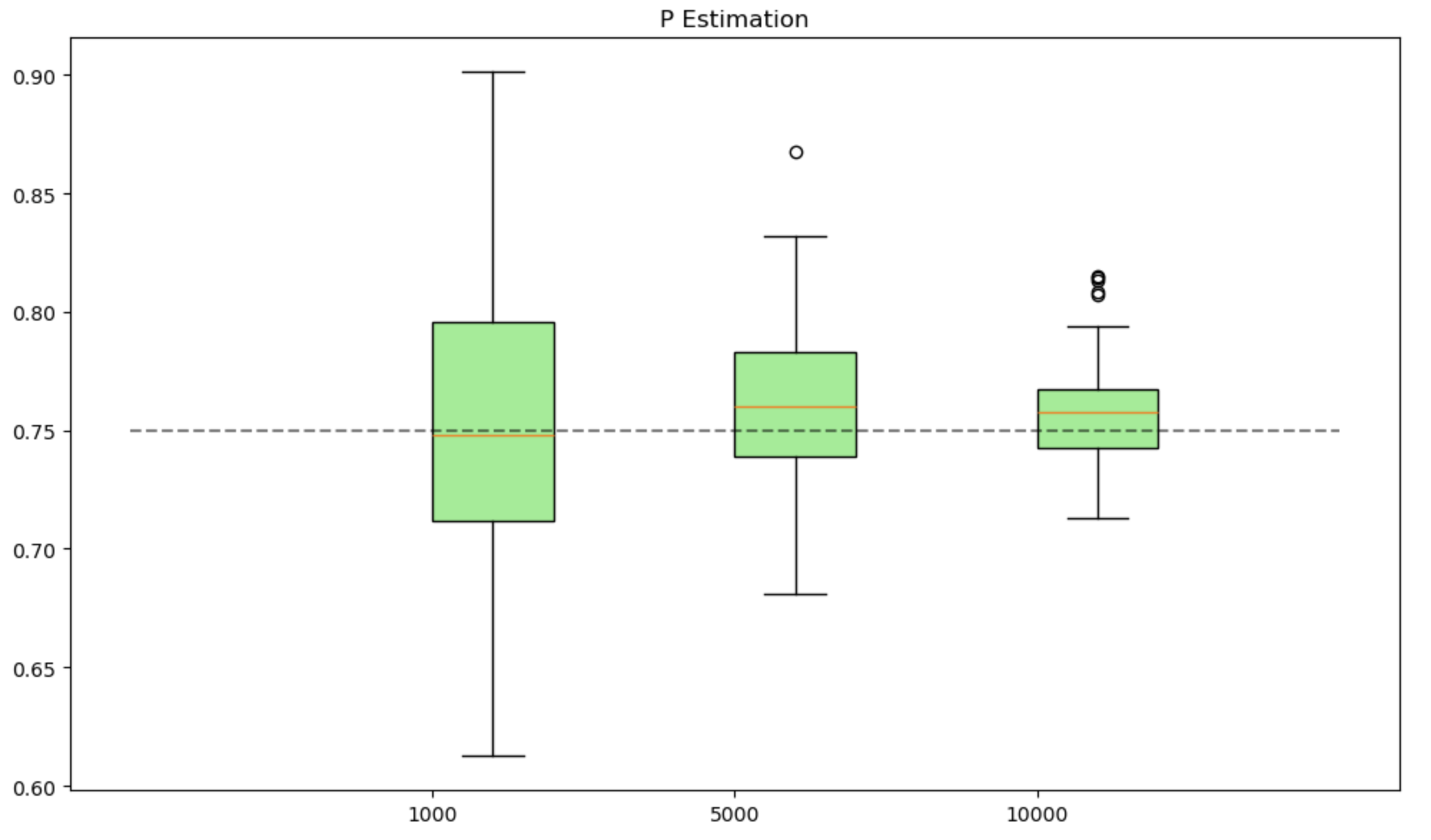}
        \caption{$\widehat{p}$}
        \label{fig:image2}
    \end{subfigure}
    
    \caption{\label{fig7} Boxplots for the estimates of \(\sigma_1^2\) and \(p\) for the case when the second component of the mixture is the convolution of Student's and normal distributions.  }
    \label{fig:allimages}
\end{figure}

Analogously to Figure~\ref{fig6}, Figure~\ref{fig9} depicts the decomposition of the density \(g\) into two  parts. It can be seen that the first part, which corresponds to the normal component  (green line), is the main part of the resulting mixture density, while the second part (orange line) makes the quality of the estimate better.
\begin{figure}[ht]
    \centering
    \includegraphics[scale=0.25]{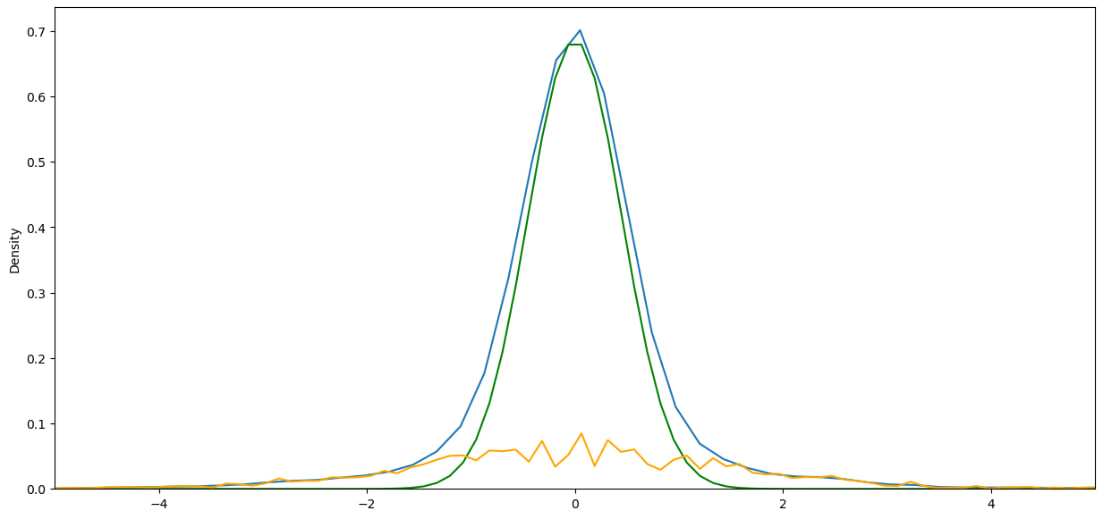}
    \caption{\label{fig9} Plot of  the true density $g(x)$  (blue line) and plots of the first and the second components (green and orange lines respectively) for the case when the second component of the mixture is the convolution of Student's and normal distributions.}
    \label{fig:enter-label}
\end{figure}

Lastly, we consider more precisely the estimation of the density \(g^{\circ}\) of the second component. Figure~\ref{fig10} illustrates the accuracy  of our estimate $g^{\circ}_n(x)$ for different \(n\), which coincides with $g^{\circ+}_{n}(x)$ in this case. The quality of this estimate increases with the growth of $n$.

\begin{figure}[h]
    \centering
    \begin{subfigure}[b]{0.45\textwidth}
        \centering
        \includegraphics[scale=0.17]{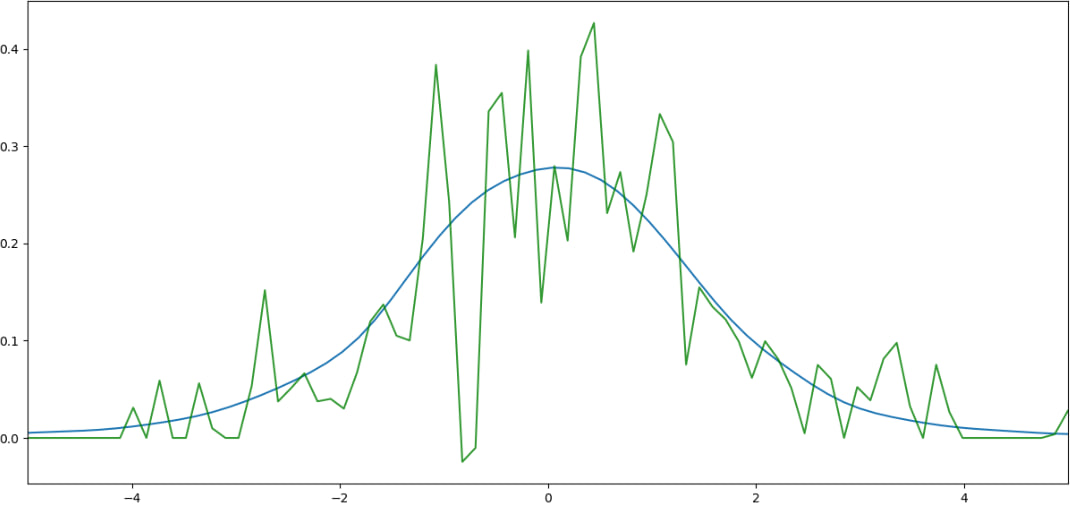}
        \caption{$n = 1000$}
        \label{fig:image1}
    \end{subfigure}
    \hfill
    \begin{subfigure}[b]{0.45\textwidth}
        \centering
        \includegraphics[scale=0.17]{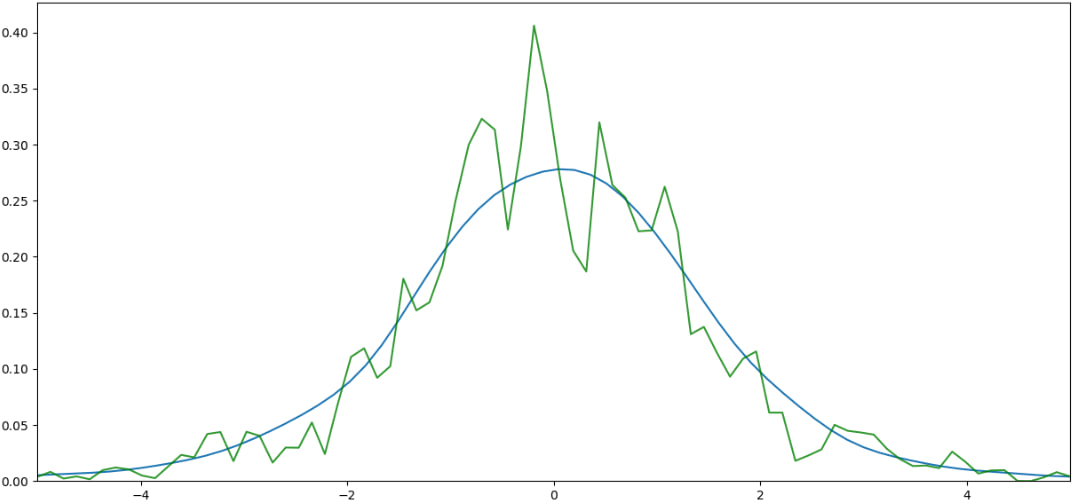}
        \caption{$n = 5000$}
        \label{fig:image2}
    \end{subfigure}
    \begin{subfigure}[b]{0.45\textwidth}
        \centering
        \includegraphics[scale=0.17]{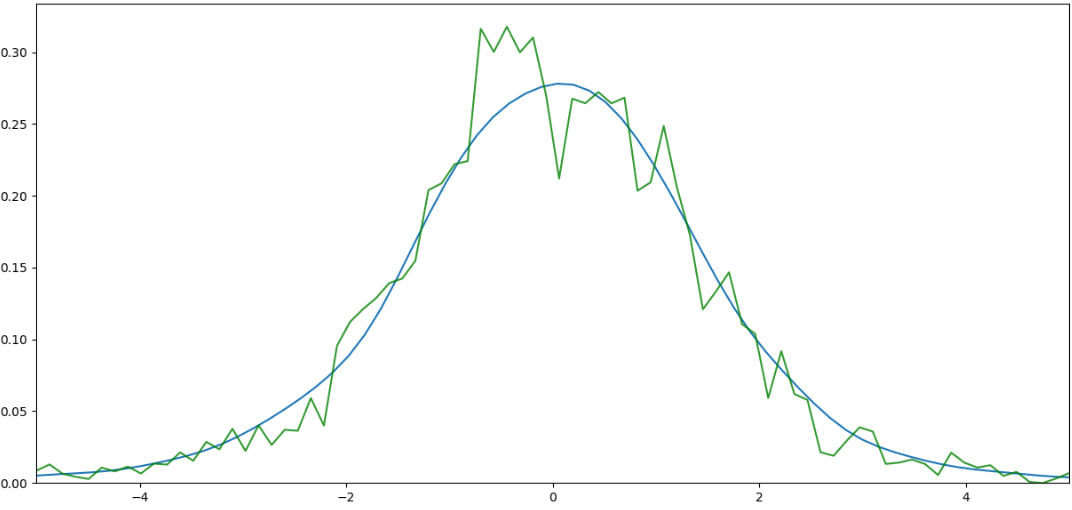}
        \caption{$n = 10000$}
        \label{fig:image3}
    \end{subfigure}
    
    \caption{\label{fig10} Plots of the true functions $g^{\circ}(x)$ (blue line) and their estimates $g_n^{\circ}(x)$ (green line) based on \(n\) observations for the case when the second component of the mixture is the convolution of Student's and normal distributions.}
    \label{fig:allimages}
\end{figure}

\section{Proofs} \label{sec7}
\subsection{Proof of Theorem~\ref{thm1}}
\textbf{1.} Below we will prove the upper bound for the mean-squared error of \(\sigma_n^2\). The proofs for \(\lambda_n^*\) and \(\gamma_n^*\) follow the same lines. 


Using the representation~\eqref{sigma_sol} and the properties~\eqref{prop1}, we get 
\begin{eqnarray*}
    \sigma^2_n - \sigma^2 &=& \int_{\R}w^{U_n}_{\sigma^2}(u)\Re\bigl(\log(\phi_n(u)) - \log(\phi(u))\bigr)du + \sigma^2\int_{\R}\frac{u^2}{2} w^{U_n}_{\sigma^2}(u) du \\
    &&\hspace{2cm} + \int_{\R}w^{U_n}_{\sigma^2}(u)\Re\bigl(\log(\phi(u))\bigr)du + \lambda^* \int_{\R}w^{U_n}_{\sigma^2}(u)du \\
    &=& \int_{\R} w^{U_n}_{\sigma^2}(u)\Re\bigl(\log(\phi_n(u)) - \log(\phi(u))\bigr)du \\
&&    \hspace{6cm}+ \int_{\R}w^{U_n}_{\sigma^2}(u)
\Re\bigl(\F [s](u)\bigr)du,
\end{eqnarray*}
where the second equality follows from~\eqref{phiexp}.
Since \(|\log(1+z) -z| \leq 2 |z|^2\) for any \(|z| <1/2\), we have on the event \(\A_n\) for $n$ large enough  \begin{eqnarray*}
    \log(\phi_n(u)) - \log(\phi(u)) =
    \log\Bigl(\frac{\phi_n(u) - \phi(u)}{\phi(u)} +1\Bigr) 
    \label{log_eq} =  \frac{\phi_n(u) - \phi(u)}{\phi(u)} + R_n (u),
\end{eqnarray*}
where \( R_n (u) =  O\Bigl(|(\phi_n(u) - \phi(u))/\phi(u)|^2\Bigr)\), \(u \in \R\). Therefore, on $\A_n$ we get 
\begin{eqnarray*}
    \Bigl|\sigma^2_n - \sigma^2
\Bigr|
\leq   I_1 + I_2 + I_3,
\end{eqnarray*}
where 
\begin{eqnarray*}
I_1 &:=& \Bigl|\int_{\R} w^{U_n}_{\sigma^2}(u)\Re\Bigl(\frac{\phi_n(u) - \phi(u)}{\phi(u)}\Bigr)du\Bigr|,\\
I_2 &:=& \Bigl|\int_{\R} w^{U_n}_{\sigma^2}(u) \Re(\mathrm{R}_n (u)) du \Bigr|,\\
I_3 &:=& \Bigl| \int_{\R}w^{U_n}_{\sigma^2}(u)\Re\bigl(\F [s ](u)\bigr)du\Bigr|.
\end{eqnarray*}Below we separately consider these three terms. 
For the terms \(I_1\) and \(I_2\) we have 
\begin{align*}
    I_j &\leq \max_{u \in [-U_n,U_n]}\Bigl| \frac{\phi_n(u) - \phi(u)}{\phi(u)}\Bigr|^{j} \Bigl(\int_{\R} |w^{U_n}_{\sigma^2}(u)| du \Bigr) 
     \lesssim \frac{\chi_n^{j}}{U_n^{2}}, \qquad j=1,2.
\end{align*}
Therefore,  \(I_2 \lesssim I_1\), and 
due to Lemma~\ref{lem1} (ii), 
\begin{eqnarray*}
I_1  \lesssim  \exp(C)
\frac{\sqrt{\log(nU_n^2)}}{\sqrt{n}U_n^{2}}
\exp(\frac{\overline{\sigma}^2U_n^2}{2}).
\end{eqnarray*}
For $I_3,$ we apply the Plancherel identity 
\begin{align*}
    I_3 &\leq |\int_{\R}w^{U_n}_{\sigma^2}(u)\F \bigl[s \bigr](u)du| \\
    &= 2\pi |\int_{\R_+} s^{(r)}(x)\overline{\F^{-1}} \bigl[ \frac{w^{U_n}_{\sigma^2}(t)}{(it)^r}\bigr](x)dx|  \\
    &\leq U_n^{-(r + 3)}||s^{(r)}||_{\infty} \cdot ||\F \bigl[ \frac{w^{1}_{\sigma^2}(u)}{u^r}\bigr]||_{L^1},
\end{align*}
yielding the upper bound \(I_3 \lesssim C  U_n^{-(r + 3)}\). Combining these bounds, we arrive at the desired result.\newline

\textbf{2.} Now we consider the upper bound for the integrated squared error of \(s_n.\) Denote the estimate of the Fourier transform of \(s\) by \[\widehat{\F\bigl[s \bigr]}(u)=
\log\bigl(
\phi_n(u) 
\bigr) - \i \gamma_n^* u + \frac{1}{2} \sigma_n^2 u^2+\lambda^*_n.\]
We have on the event \(\A_n\),
\begin{eqnarray*}
\bigl|
\widehat{\F\bigl[s \bigr]}(u)  - 
\F\bigl[s \bigr](u)
\bigr| 
\lesssim \chi_n + \frac{\mathcal{R}_n}{U_n} u +  \frac{\mathcal{R}_n}{U_n^2} u^2 +\mathcal{R}_n.
\end{eqnarray*}
Next, using the Parseval identity we get 
\begin{eqnarray*} 
\int_{\R} \bigl( 
s_n(x) - s(x) 
\bigr)^2 dx &=& 
\int_{\R} \bigl( 
\F^{-1}\bigl[\widehat{\F\bigl[s \bigr]}(\cdot)w_s(\cdot / T_n)\bigr](x) - s(x) 
\bigr)^2 dx \\ 
&=&
\frac{1}{2\pi}
\int_{\R} \bigl|
\widehat{\F\bigl[s \bigr]}(u)w_s(u / T_n) - \F[s](u) 
\bigr|^2 du \\
&\lesssim& 
\int_{\R} \bigl|
\widehat{\F\bigl[s \bigr]}(u) - \F[s](u) 
\bigr|^2 \bigl( w_s(u / T_n) \bigr)^2du 
\\
&& + \int_{\R} \bigl|
\F[s](u) 
\bigr|^2 \bigl| w_s(u / T_n) -1 \bigr|^2du 
=: J_1 +J_2.
\end{eqnarray*} 
Now let us consider two terms separately.  On the event \(\A_n,\)
\begin{eqnarray*}
J_1 &\lesssim& \bigl(\chi_n^2 + \mathcal{R}_n^2 \bigr) T_n +  \frac{\mathcal{R}_n^2}{U_n^2} T_n^2 + \frac{\mathcal{R}_n^2}{U_n^4} T_n^3,
\end{eqnarray*}
because for any natural \(k\), \(\int_{-1}^{1} v^k (w_s(v))^2 dv \leq \int_{-1}^{1} (w_s(v))^2 dv\) is bounded. As for the second summand, we get 
\begin{eqnarray*}
J_2 &\lesssim&  T_n^{-2r} \int_{\R} \bigl|
\F[s](u) 
\bigr|^2 u^{2r} du =
T_n^{-2r} \int_{\R} \bigl|
\F[s^{(r)}](u) 
\bigr|^2 du \\
&\lesssim& T_n^{-2r} \|
s^{(r)}
\|^2 \lesssim C T_n^{-2r}.
\end{eqnarray*}
This observation completes the proof.
\subsection{Proof of Theorem~\ref{thm2}}
The proof is similar to the proof of Theorem~\ref{thm1} with the different approach for the estimation of the term $I_3$. Using the properties of the L{\'e}vy measure for the QID distributions from the class $\mathcal{S}^*(r, \bar{\sigma}, C,\underline{p})$, we get the following upper bound
\begin{align*}
    I_3 &=  \Bigl|\int_{\R_+} w^{U_n}_{\sigma^2}(u) \F \Bigl[ \sum_{m = 1}^{\infty} \frac{(-1)^{m + 1}}{m} (q/p)^m\Lambda^{*m} \Bigr](u)du \Bigr| \\
    &= \Bigl|\int_{\R_+} w^{U_n}_{\sigma^2}(u) \sum_{m = 1}^{\infty} \frac{(-1)^{m + 1}}{m} (q/p)^m \bigl(\H(u)\bigr)^m du \Bigr| \\
    &= \Bigl|U_n^{-2}\sum_{m=1}^{\infty} \frac{(-1)^{m + 1}}{m} (q/p)^m \int_{\varepsilon}^{1} w^{1}_{\sigma^2}(u)\bigl(\H(uU_n)\Bigr)^m du \bigr| \\
    &\leq U_n^{-2} \sum_{m=1}^{\infty} \frac{(q/p)^m}{m}   \int_{\varepsilon}^{1} |w^{1}_{\sigma^2}(u)\bigl|\cdot \bigl| \H(uU_n)\bigr|^m du.
\end{align*}
Due to our assumption~\eqref{ss}, we get
for  large \(n\) \begin{align*}
    I_3
    &\leq c_1 U_n^{-2} \int_{\varepsilon}^{1} |w^{1}_{\sigma^2}(u)| e^{-c_2(u U_n)^{\gamma}} du  \cdot \sum_{m=1}^{\infty} \frac{(q/p)^m}{m} \\
    &\leq c_1U_n^{-2} e^{-c_2 \eps^{\gamma} U_n^{\gamma}}\bigl(\int_{\varepsilon}^{1} |w^{1}_{\sigma^2}(u)| du \bigr)  \cdot \log\Bigl( \frac{p}{p-q} \Bigr)   \\
    &\lesssim c_1  \log\Bigl( \frac{1}{2 \underline{p}-1} \Bigr)  U_n^{-2} \e^{-c_2 \eps^{\gamma} U_n^{\gamma}}.
\end{align*}
Combining this inequality with the upper bounds for \(I_1\) and \(I_2\), which were obtained in the previous section, completes the proof.

\subsection{Proof of Remark~\ref{remark2}}
To show the upper bound for the estimate $p_n= e^{-\lambda^*_n}$, we use the Taylor series decomposition with the stochastic remainder term. For a \(k\)-times differentiable function \(f: \C \to \C,\) this decomposition reads as 
\begin{multline}
    f(z) = \sum_{j=0}^{k-1}\frac{f^{(j)}(a)}{j!}(z - a)^j \\+ \frac{1}{(k - 1)!}\E_{\tau}\Bigl[
(1 - \tau)^{k - 1} f^{(k)}(a + \tau(z - a))\Bigr]
(z - a)^k, 
\label{taylor}
\end{multline}
where $\tau$ is uniformly distributed on $[0, 1]$, see \cite{panov_morozova}, Appendix~A.1. Therefore, \begin{align*}
    p_n - p &= e^{-\lambda^*_n} - e^{-\lambda^*} \\ 
    &= -e^{-\lambda^*}\Bigl((\lambda^*_n - \lambda^*) - (\lambda^*_n - \lambda^*)^2 \E_{\tau} \Bigl[ (1 - \tau)e^{-\tau(\lambda^*_n - \lambda^*)}\Bigr]\Bigr),
\end{align*}
and we conclude that 
\begin{align}\label{p_estim}
    \bigl|p_n - p\bigr| &\leq |\lambda^*_n - \lambda^*| + |\lambda^*_n - \lambda^*|^2e^{|\lambda^*_n - \lambda^*|} \lesssim \breve{\mathcal{R}}_n,
\end{align}
and the result follows.
\subsection{Proof of Theorem~\ref{thm3}}
First note that
\begin{equation*}
    \E\bigl(g^{\circ+}_{n}(x) - g^{\circ}(x)\bigr)^2 \leq \E\bigl(g^{\circ}_{n}(x) - g^{\circ}(x)\bigr)^2, \quad \forall x\in \R,
\end{equation*}
since $g^{\circ}(x) \geq 0$, $\forall x\in \R$. Therefore, it is sufficient to establish  the result of Theorem~\ref{thm3} for $g^{\circ}_{n}(x)$ instead of $g^{\circ+}_{n}(x)$. We have 
\begin{align*}
    \int_{\R} \E &\Bigl[ \bigl(g^{\circ}_{n} (x) - g^{\circ}(x)\bigr)^2 | \A_n\Bigr] dx \\
    &= \int_{\R} \E \Bigl[ \Bigl(\frac{g_n(x) - p_n\varphi_{\sigma_n}(x)}{1-p_n}- \frac{g(x) - p\varphi_{\sigma}(x)}{1-p}\Bigr)^2 | \A_n\Bigr]dx\\
    &= \int_{\R} \E \Bigl[ \frac{\Bigl((1 - p)(I_1 +I_2) +I_3 + I_4\Bigr)^2}{(1-p_n)^2(1-p)^2}| \A_n\Bigr] dx \\\ 
    &\lesssim \frac{1}{\bigl(1- \overline{p}\bigr)^4}\int_{\R} \E \Bigl[ \bigl(I_1^2 +I_2^2 + I_3^2 +I_4^2\bigr) |\A_n\Bigr] dx,
\end{align*}
where
\begin{align*}
    I_1 &= |g_n(x) - g(x)|, \qquad 
    I_2 = p_n |\varphi_{\sigma_n}(x) - \varphi_{\sigma}(x)|, \\
    I_3 &= |p-p_n|\varphi_{\sigma}(x),  \qquad 
    I_4 = |p - p_n|g(x),
\end{align*}
and we use that $\underline{p} \leq p \leq \overline{p}$ and 
\begin{eqnarray*}
|1-p_n| \geq |1-p| - |p-p_n| \geq |1-\overline{p}| +o(1), \qquad n \to \infty,
\end{eqnarray*}
on the event \(\A_n\).
Below we consider the terms  \(I_1,...,I_4\) separately. For the first term, we use Theorem 1.3 from \cite{Tsybakov}, which yields
\begin{equation}\label{thm3_proof2}
    \E \bigl[ I_1^2 | \A_n \bigr] \lesssim \frac{1}{nh}\int_{\R} K^2(u)du + \frac{h^4}{4}(\int_{\R}|u^2K(u)|du)^2\int_{\R}(g''(x))^2dx.
\end{equation}
Note that here we use that  \(g'' \in L^2(\R)\) due to the condition \(g_\circ'' \in L^2(\R).\)
For the second term, we use the  decomposition~\eqref{taylor} with the function \(f(z) = \varphi_{z}(x)\) and any fixed \(x \in \R\). Note that  on the event \(\A_n,\) $$ \xi := \sigma^2 + \tau(\sigma^2_n - \sigma^2) \in \bigl[ \underline{\sigma}^2, \overline{\sigma}^2 \bigr] $$for \(n\) large enough, see Remark~\ref{remark2}. This observation yields the following upper bound for the difference between \(\varphi_{\sigma_n}\) and  \(\varphi_{\sigma}\):
\begin{align*}
    |\varphi_{\sigma_n}(x) - \varphi_{\sigma}(x) | &= \Bigl|\frac{1}{\sqrt{2\pi}\sigma_n}\exp(\frac{-x^2}{2\sigma^2_n}) - \frac{1}{\sqrt{2\pi}\sigma}\exp(\frac{-x^2}{2\sigma^2})\Bigr| \\
    &= \Bigl|\E_{\tau}\Bigl[\frac{1}{\xi}(\frac{x^2}{\xi} - 1)\varphi_{\xi}(x)\Bigr](\sigma^2_n - \sigma^2)\Bigr| \\
    &\lesssim \frac{1}{\underline{\sigma}^3}\bigl(\frac{x^2}{\underline{\sigma}^2} + 1\bigr) \exp(-\frac{x^2}{2\overline{\sigma}^2}) \frac{\breve{\mathcal{R}}_n}{U_n^2}.
\end{align*}
Therefore, 
\begin{align}\label{thm3_proof4}
    \E \bigl[ I^2_2 | \A_n \bigr] &\lesssim  \E \bigl[p_n^2 | A_n \bigr]\int_{\R} \Bigl( \frac{1}{\underline{\sigma}^3}(\frac{x^2}{\underline{\sigma}^2} + 1)\exp(-\frac{x^2}{2\overline{\sigma}^2})\Bigr)^2 dx \cdot \frac{\breve{\mathcal{R}}_n^2}{U_n^4} \nonumber \\
    &=   \E \bigl[p_n^2 | \A_n \bigr] \cdot \frac{\overline{\sigma}\sqrt{\pi}}{\underline{\sigma}^6}\Bigl(\frac{3}{4}\bigl(\frac{\overline{\sigma}}{\underline{\sigma}}\bigr)^4 +\bigl(\frac{\overline{\sigma}}{\underline{\sigma}}\bigr)^2 + 1\Bigr)  \cdot \frac{\breve{\mathcal{R}}_n^2}{U_n^4}  \nonumber \\
    &\lesssim  \frac{\overline{\sigma}}{\underline{\sigma}^6}\Bigl(\frac{3}{4}\bigl(\frac{\overline{\sigma}}{\underline{\sigma}}\bigr)^4 +\bigl(\frac{\overline{\sigma}}{\underline{\sigma}}\bigr)^2 + 1\Bigr) \cdot \frac{\breve{\mathcal{R}}_n^2}{U_n^4} ,
\end{align}
since \(\E \bigl[p_n^2 | \A_n \bigr]  \lesssim\E \bigl[(p_n - p )^2 | \A_n \bigr] + p^2 \leq 1\) for \(n\) large enough and 
\begin{equation*}
	\int_{\R} x^ne^{-x^2}dx = \frac{n!\sqrt{\pi}}{2^n(n/2)!}
\end{equation*} for even $n$.  Next, we get for the third term
\begin{align}\label{thm3_proof3}
    \E \bigl[ I_3^2 | \A_n \bigr] &= \E \bigl[ (p_n - p)^2 | \A_n \bigr] \cdot \int_{\R} \varphi^2_{\sigma}(x)dx
    \nonumber \\
    &= \frac{1}{2\sqrt{\pi}\sigma}\E \bigl[ (p_n - p)^2 | \A_n \bigr] \lesssim \frac{1}{\underline{\sigma}}\breve{\mathcal{R}}_n^2.\end{align}
Analogously, for the term $\E I^2_4$ we get
\begin{align}\label{thm3_proof4}
    \E \bigl[ I_4^2 | \A_n \bigr] &= \E \bigl[ (p_n - p)^2 | \A_n \bigr] \cdot \int_{\R} g^2(x)dx   \lesssim  \int_{\R} g^2(x) dx \cdot \breve{\mathcal{R}}_n^2.
\end{align}
Combining these bounds, we arrive at desired result.

\section*{Acknowledgment}
The article was prepared within the framework of the HSE University Basic Research Program.
\section*{Data Availability Statement}
The data that support the findings of this study are available from the corresponding author, Vladimir Panov, upon reasonable request.

\bibliographystyle{apalike}

\end{document}